\newtheorem{theorem}{Theorem}
\newtheorem{proposition}{Proposition}
\let\oldnl\nl
\newcommand{\nonl}{\renewcommand{\nl}{\let\nl\oldnl}}
\newtheorem{lemma}{Lemma}
\newtheorem{example}{Example}
\def\BibTeX{{\rm B\kern-.05em{\sc i\kern-.025em b}\kern-.08em
    T\kern-.1667em\lower.7ex\hbox{E}\kern-.125emX}}
\begin{document}

\title{On Decoding of Reed-Muller Codes Using a Local Graph Search}

\author{Mikhail~Kamenev%
\thanks{This paper has been presented in part at the 2020 IEEE Information Theory Workshop \cite{shortGS}.}%
\thanks{M. Kamenev is with the Moscow Research Center, Huawei Technologies Co., Ltd., Moscow, Russia. Email: kamenev.mikhail1@huawei.com}}

\maketitle

\begin{abstract}
We present a novel iterative decoding algorithm for Reed-Muller (RM) codes, which takes advantage of a graph representation of the code.
Vertices of the considered graph correspond to codewords, with two vertices being connected by an edge if and only if the Hamming distance between the corresponding codewords equals the minimum distance of the code.
The algorithm uses a greedy local search to find a node optimizing a metric, e.g. the correlation between the received vector and the corresponding codeword.
In addition, the cyclic redundancy check can be used to terminate the search as soon as a valid codeword is found, leading to an improvement in the average computational complexity of the algorithm.
Simulation results for both binary symmetric channel and additive white Gaussian noise channel show that the presented decoder approaches the performance of maximum likelihood decoding for RM codes of length less than 1024 and for the second-order RM codes of length less than 4096.
Moreover, it is demonstrated that the considered decoding approach outperforms state-of-the-art decoding algorithms of RM codes with similar computational complexity for a wide range of block lengths and rates.

\end{abstract}
\begin{IEEEkeywords}
Reed-Muller codes, BSC channels, AWGN channels, near maximum-likelihood decoding, Fast Hadamard Transform.
\end{IEEEkeywords}
\IEEEpeerreviewmaketitle

\section{Introduction}
\IEEEPARstart{B}{inary} Reed-Muller (RM) codes were firstly discovered by Muller \cite{Muller} and then by Reed, who also proposed a majority decoding algorithm for this family of error correction codes \cite{Reed}.
Although it has been proven that RM codes achieve the capacity on an erasure channel under maximum a posteriori (MAP) decoding \cite{Urbanke}, no MAP decoding algorithm that can be efficiently used for decoding of RM codes is known.

However, there are several decoding algorithms of RM codes that allow achieving the performance of a maximum likelihood (ML) decoder for a wide range of block lengths and rates.
For instance, a recursive permutation list decoder with a list size less than or equal to 128 allows achieving near-ML decoding performance for small length $\left( \leq 256 \right)$ codes \cite{Dumer}.
Since this algorithm processes a received codeword sequentially and uses a sorting operation, the latency of a hardware implementation of the recursive permutation list decoder is high.
Permutation-based techniques improving the latency of the recursive permutation list decoder have been proposed in \cite{Stolte, MarvinRMPerm, FHTRMPerm}.
Unfortunately, the complexity of recursive algorithms required for near-ML decoding performance grows exponentially with the code length \cite{Kirill2}.

Another algorithm that performs close to ML decoding is a recursive projection-aggregation (RPA) decoder proposed in \cite{MinYe}.
This algorithm is based on projecting the code on its cosets, recursively decoding the projected codes, and aggregating the reconstructions.
The decoder demonstrates near-ML decoding performance for RM codes of length less than 512 and for the second-order RM codes.
Moreover, it allows for parallel implementation.
Since the code is projected on all its cosets, the RPA algorithm takes a very long time to decode.
Techniques aiming to improve the running time of projection-based decoding have been considered in \cite{KirillRPA, SRPA, RPASubcodes}.

Several other approaches for decoding of RM codes have been proposed recently.
However, they demonstrate near-ML error correction performance for short length codes  \cite{PermGross, RMBP, Pfister, Pfister2, seq1, seq2} or applicable only for a binary erasure channel \cite{Kirill, scInactivation}.

In this paper, we propose a new iterative decoding algorithm for RM codes.
Since recursive decoding needs a large complexity to approach the ML decoding performance for RM codes of length larger than 256 \cite{Kirill2}, the proposed algorithm employs a low-complex recursive decoder \cite{recDec} just to get the initial candidate codeword and then iteratively improves it using a local graph search.
The nodes of the considered graph correspond to codewords.
Any two nodes of the graph are connected by an edge if and only if the Hamming distance between two corresponding codewords equals the minimum distance of the code.
Simulation results demonstrate that the proposed algorithm approaches the performance of the ML decoder for moderate length $\left(\leq 512 \right)$ RM codes and for the second-order RM codes of length less than 4096, with average computational complexity being reasonable.
We also observe that our algorithm outperforms both recursive permutation list decoding and RPA decoding with similar computational complexity in most considered cases.

The rest of the paper is organized as follows.
In Section II, we shortly introduce RM codes.
In Section III, we provide a high-level description of our decoding algorithm.
In Section IV, we present a greedy version of the algorithm that allows to significantly improve the running time.
Numerical results are presented in Section V.
We conclude the paper in Section VI.

\section{RM Codes}
Denote by $f\left(\mathbf{v} \right) = f\left(v_1, \dotsc , v_{m}\right)$ a Boolean function of $m$ variables that is written in the algebraic normal form.
Let $\mathbf{f}$ be the vector of length $2^m$ containing values of $f$ at all of its $2^m$ arguments.
The binary RM code $\mathcal{R}\left(r, m\right)$ of length $n$, $n = 2^m$, and order $r$, $0\leq r \leq m$, is the set of all vectors $\mathbf{f}$, where $f\left(\mathbf{v}  \right)$ is a Boolean function of degree at most $r$.
Note that the minimum distance of the RM code $\mathcal{R}\left(r, m\right)$ equals $2^{m-r}$ \cite[Sec.~13.3]{Sloan}.

RM codes can also be considered in terms of finite geometries.
Denote by $\mathrm{EG}(m,2)$ the Euclidean geometry of dimension $m$ over $\mathrm{GF}(2)$.
By definition, $\mathrm{EG}(m,2)$ contains $2^m$ points, whose coordinates are all binary vectors $\mathbf{v}$ of length $m$.
We associate every subset $\mathcal{S}$ of points of $\mathrm{EG}(m,2)$ with a binary incidence vector of length $2^m$ that contains one in positions $s \in \mathcal{S}$ and zeros elsewhere.
Then the codewords of $\mathcal{R}\left(r, m\right)$ can be considered as incidence vectors of subsets of $\mathrm{EG}(m,2)$.
For instance, the minimum weight codewords of $\mathcal{R}\left(r, m\right)$ are exactly the incidence vectors of $(m-r)$-dimensional subspaces of $\mathrm{EG}(m,2)$  \cite[Sec.~13.4]{Sloan}.

\section{Graph Search Based Decoding of RM Codes}
Consider an RM code $\mathcal{R}\left(r, m\right)$ of dimension $k$ and a graph $\mathcal{G}$ with $2^k$ nodes.
Assign to each node in the graph $\mathcal{G}$ a codeword of $\mathcal{R}\left(r, m\right)$.
Then two nodes are connected by an edge if and only if the Hamming distance between corresponding codewords equals $2^{m - r}$, i.e. the minimum distance of the code.
Assume that $\textbf{y} = \left(y_1, y_2, \dots, y_n \right) $ is a log-likelihood ratio (LLR) vector of a binary-input additive white Gaussian noise (BI-AWGN) channel.
Then, for each node in the graph $\mathcal{G}$, we can assign the correlation metric $M$ \cite[Sec.~10.1]{shulin}, defined by
\begin{equation*}
M = \sum\limits_{i = 1}^{n} \left(1 - 2c_i\right)y_{i},
\end{equation*}
where $\left(c_1, c_2, \dots, c_n \right)$ is a codeword assigned to the node.
Note that the ML decoder returns a codeword with the largest correlation metric $M$ \cite[Sec.~10.1]{shulin}.

Using a graph traversal algorithm (breadth-first search or depth-first search), one can start at a random node, visit all reachable nodes, and return the codeword that corresponds to the node with the largest metric.

\begin{proposition}
The algorithm described above is equivalent to the ML decoder.
\end{proposition}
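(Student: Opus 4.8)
The plan is to prove that the graph $\mathcal{G}$ is connected. Once connectivity is established, a graph traversal (BFS or DFS) starting from any node visits every node, hence examines every codeword of $\mathcal{R}(r,m)$ and returns the one with the largest metric $M$, which by definition is the ML codeword. So the entire proposition reduces to the claim that any two codewords are linked by a path of edges, where each edge connects codewords at Hamming distance exactly $2^{m-r}$.

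The key structural observation is that the set of edges incident to a fixed node is determined by translation: if $\mathbf{c}$ and $\mathbf{c}'$ are codewords with $d(\mathbf{c},\mathbf{c}') = 2^{m-r}$, then since the code is linear, $\mathbf{c} \oplus \mathbf{c}'$ is a nonzero codeword of weight exactly $2^{m-r}$, i.e. a minimum-weight codeword. Conversely, adding any minimum-weight codeword to $\mathbf{c}$ produces a neighbor of $\mathbf{c}$. Therefore connectivity of $\mathcal{G}$ is equivalent to the statement that the minimum-weight codewords of $\mathcal{R}(r,m)$ generate the whole code: every codeword must be expressible as a sum of minimum-weight codewords, since a path from $\mathbf{0}$ to $\mathbf{c}$ corresponds to writing $\mathbf{c}$ as such a sum. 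First I would reduce connectivity to this spanning statement via the linearity and translation-invariance argument.

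The main step is then to show that the weight-$2^{m-r}$ codewords span $\mathcal{R}(r,m)$. Here I would invoke the finite-geometry description already introduced in the excerpt: the minimum-weight codewords of $\mathcal{R}(r,m)$ are precisely the incidence vectors of $(m-r)$-dimensional affine (flat) subspaces of $\mathrm{EG}(m,2)$ \cite[Sec.~13.4]{Sloan}. Equivalently, in the Boolean-function picture, a minimum-weight codeword is the indicator of the solution set of $r$ independent affine equations, i.e. a product of $r$ independent affine factors $\prod_{j=1}^{r}(v_{i_j} \oplus b_j)$. It is a standard fact that these products of affine forms span the space of all Boolean functions of degree at most $r$: any monomial of degree $\le r$ can be written as a $\mathrm{GF}(2)$-linear combination of such products by expanding, which shows the minimum-weight codewords generate the code. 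This is the load-bearing fact and the part I expect to require the most care to state cleanly.

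The hardest part will be arguing the spanning claim rigorously without simply quoting it, since it is the genuine mathematical content; the reduction steps (linearity, translation-invariance, the BFS/DFS-visits-everything argument) are routine once connectivity is in hand. An alternative, perhaps cleaner, route would be to argue by induction on $m$ using the $|\mathbf{u}|\mathbf{u}\oplus\mathbf{v}|$ recursion from the excerpt: assuming the graphs for $\mathcal{R}(r,m)$ and $\mathcal{R}(r-1,m)$ are connected, one shows that minimum-weight codewords of the two constituent codes lift to minimum-weight codewords of $\mathcal{R}(r,m+1)$ and that these suffice to connect the blocks $|\mathbf{u}|\mathbf{u}\rangle$ and to toggle the $\mathbf{v}$ component, thereby connecting all of $\mathcal{R}(r,m+1)$. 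I would present the finite-geometry argument as the main line and mention the recursive argument as a self-contained alternative, with the base cases $r=0$ (repetition code, trivially connected as two antipodal nodes) and $r=m$ (full space, every weight-$1$ vector is minimum weight so the hypercube graph is connected) handled directly.
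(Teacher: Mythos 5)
Your proposal takes the same route as the paper: reduce equivalence with ML decoding to connectivity of $\mathcal{G}$, and reduce connectivity (via linearity, since two codewords are adjacent exactly when their sum is a minimum-weight codeword) to the fact that the minimum-weight codewords generate $\mathcal{R}(r,m)$. The only difference is that the paper disposes of this last fact by citing Theorem 12 of \cite[Sec.~13.6]{Sloan}, whereas you sketch a self-contained proof of it --- minimum-weight codewords are incidence vectors of $(m-r)$-dimensional flats, and any monomial of degree $s \le r$ can be padded with $r-s$ further independent affine factors and summed over the $2^{r-s}$ constant choices to express it as a sum of such flat indicators --- which is a correct argument for the cited generation theorem.
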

\begin{IEEEproof}
Observe that the algorithm achieves the performance of the ML decoder if all nodes in the graph $\mathcal{G}$ can be visited.
It is possible if and only if the graph $\mathcal{G}$ is connected.
Since the minimum weight codewords generate the code \cite[Sec.~13.6]{Sloan}, it follows that there is a path between the node associated with the all-zero codeword and any other node of the graph $\mathcal{G}$.
Therefore, $\mathcal{G}$ is a connected graph.
\end{IEEEproof}

Unfortunately, the computational complexity of the algorithm grows exponentially with the code dimension that makes it infeasible for practical usage.
To decrease the complexity of the search algorithm, we propose the following greedy approach.
The algorithm starts at a node associated with an output of low-complex recursive decoding \cite{recDec}.
Then it moves to the adjacent node that has the biggest metric value $M$ and has not been visited before.
The algorithm continues till a maximum number of iterations $N$ is reached or all adjacent nodes are already visited.
The output of the algorithm is the codeword corresponding to the visited node with the biggest metric.

The maximum number of iterations is used here to limit the computational complexity of the algorithm and the amount of memory used to store visited nodes.
In addition, if all adjacent nodes are visited, then the algorithm can be finished until the maximum number of iterations is reached.
Another option that can be used to decrease the average computational complexity is the termination of the algorithm if a codeword that has the best metric found so far satisfies the cyclic redundancy check (CRC).
The formal description of the proposed decoding algorithm is shown in Algorithm \ref{GraphSearchAlg}.

\begin{example}
Consider the RM code $\mathcal{R}\left(2, 3\right)$ and an LLR vector 
\begin{equation}
\mathbf{y} = \left(2.76,  5.68, -6.58,  4.42, -0.09,  3.9 ,  3.56, -1.91 \right).
\label{llrVectorExample}
\end{equation}
 Let us decode this vector using the graph search algorithm with $3$ iterations.
 Assume that the recursive decoder returns a vector $\mathbf{c} = \left(0,0,1,1,1,1,0,0 \right)$ with a metric of $8.44$.
 In the first iteration, the algorithm finds a vector $\mathbf{c_1} = \left(0,0,1,0,1,0,0,0 \right)$ that has the largest correlation ($M = 25.08$) among the codewords at the minimum distance from $\mathbf{c}$. 
In the next iteration, the algorithm performs a search over the codewords at the minimum distance from $\mathbf{c_1}$. The result is the vector  $\mathbf{c_2} = \left(0,0,1,0,0,0,0, 1 \right)$ with a metric of $28.72$. In the last iteration, a vector with the largest metric is $\mathbf{c_1}$.
  Note that this vector has been found before. 
Therefore, the algorithm chooses the vector $\mathbf{c_3} = \left(1,0,1,0,1,0,0,1 \right)$ with the second largest correlation that is equal to $23.38$. 
The output of the algorithm is a codeword with the largest correlation from the set $\left\{\mathbf{c}, \mathbf{c_1}, \mathbf{c_2}, \mathbf{c_3} \right\}$.
 Thus, the output of the algorithm is the codeword $\mathbf{c_2}$.
  It can be readily verified that $\mathbf{c_2}$ is the ML solution for the vector $\mathbf{y}$. 
\end{example}

Note that an algorithm similar to Algorithm \ref{GraphSearchAlg} has been proposed in \cite{lowWeightTrellis} for decoding arbitrary linear codes.
This algorithm starts searching with a codeword returned by a low-complex decoding algorithm and uses decoding over a low-weight trellis instead of the \texttt{NextStep} function.
Although decoding over a low-weight trellis has lower complexity compared to trellis-based ML decoding, it is computationally expensive to run this search procedure many times.
As we will show in the next section, there exists a version of the \texttt{NextStep} function implementing a low-complex greedy search algorithm for RM codes.
It allows running Algorithm \ref{GraphSearchAlg} with a large $N$, with the average computational complexity being reasonable.

 \begin{algorithm}[!t]
 \DontPrintSemicolon
\SetAlgoLined
\KwIn{RM code parameters $r$ and $m$, a vector of LLRs $\mathbf{y}$, a maximum number of iterations $N$, a flag $t$ whether to use the CRC to terminate the algorithm}
\KwOut{A codeword $\mathbf{res}$}
Set $\mathcal{C}$ to be an empty set of codewords\;
Let $\mathbf{c}$ be the result of recursive decoding of the vector $\mathbf{y}$ \;
Add $\mathbf{c}$ to $\mathcal{C}$ \;
$\mathbf{res} \gets \mathbf{c}$ \;
$M \gets 0$ \;
\For {$i = 1, 2, \dots, 2^m$} {
$M \gets M + \left(1 - 2\mathbf{c}[i]\right)\mathbf{y}[i]$ \tcp*{Compute the metric for the codeword $\mathbf{c}$.}
}
\For {$i = 1, 2, \dots, N$}{
$\mathbf{c^\prime}, M^\prime \gets \texttt{NextStep}\left(r, m, \mathbf{y}, \mathbf{c}, \mathcal{C}\right)$ \tcp*{\texttt{NextStep} goes through all codewords $\mathbf{x}$, $\mathbf{x} \notin \mathcal{C}$, such that the Hamming distance between $\mathbf{x}$ and $\mathbf{c}$ equals $2^{m-r}$, and returns the codeword $\mathbf{c^\prime}$ that has the biggest metric. $M^\prime$ is the metric of the output codeword. If all $\mathbf{x} \in \mathcal{C}$, then $M^\prime = -\infty$.}
\If {$M^\prime = -\infty$} {
\Break \tcp*{All adjacent nodes are already visited.}
}
$\mathbf{c} \gets \mathbf{c^\prime}$ \tcp*{Move to an adjacent node.}
Add $\mathbf{c}$ to $\mathcal{C}$ \;
\If {$M^\prime > M$}{ \tcp*{Save the current codeword if it is better than one found so far.}
$\mathbf{res} \gets \mathbf{c}$ \;
$M \gets M^\prime$ \;
\If {$t$ \textnormal{and} $\mathbf{res}$ \textnormal{satisfies the CRC}}{
\Break 
}
}
}
\Return{$\mathbf{res}$}
 \caption{The \texttt{GraphSearch} decoding function} 
 \label{GraphSearchAlg}
 \end{algorithm}

 \section{A Simplified Selection of an Adjacent Node}
The bottleneck of Algorithm \ref{GraphSearchAlg} is the computational complexity of the function \texttt{NextStep}.
A naive implementation of this function goes through all codewords $\mathbf{c}$ of a code $\mathcal{R}\left(r, m\right)$ such that the Hamming distance between $\mathbf{c}$ and the current codeword equals $2^{m-r}$, i.e. the minimum distance of the code.
The number of minimum weight codewords in $\mathcal{R}\left(r, m\right)$ is defined as \cite[Sec.~13.4]{Sloan}
 \begin{equation*}
 A_{2^{m-r}} = 2^r \prod\limits_{i = 0}^{m - r - 1}\frac{2^{m-i} - 1}{2^{m - r - i} - 1}.
 \end{equation*}
 For instance, the number of minimum weight codewords of the code $\mathcal{R}\left(4, 9\right)$ is approximately $53 \cdot 10^6$.
Thus, a brute-force algorithm that goes through all adjacent nodes takes a long time to find the next codeword.
To solve this issue, we propose a greedy search algorithm that has the best-case running time of $\mathcal{O}\left(n\log \left(\max \left\lbrace N, n \right\rbrace \right) \right)$.

Consider an RM code $\mathcal{R}\left(r, m\right)$ of length $n$ and a rooted tree $\mathcal{T}$ constructed in the following way.
The root of the tree is assigned $\mathrm{EG}(m,2)$.
Each node at depth $i$, $i \in \left\lbrace 0, \dots, r - 1 \right\rbrace$, has $2^{m - i + 1} - 2$ children containing all possible $(m - i - 1)$-dimensional subspaces of the parent node \cite[Appx. B]{Sloan}.
By the construction of the tree $\mathcal{T}$, the leaf nodes contain all $(m - r)$-dimensional subspaces of $\mathrm{EG}(m,2)$.
Recall that the union of these subspaces contains all minimum weight codewords of $\mathcal{R}\left(r, m\right)$  \cite[Sec.~13.4]{Sloan}.
Thus, if we replace the points of subspaces with the corresponding coordinate positions in the codeword, then the leaves of the tree $\mathcal{T}$ will store non-zero positions of all minimum weight codewords in $\mathcal{R}\left(r, m\right)$.
The tree $\mathcal{T}$ for the RM code $\mathcal{R}\left(2, 3\right)$ is illustrated in Fig. \ref{MinCodewordsTree}.
Observe that different leaf nodes can contain non-zero positions of the same codeword.

The tree $\mathcal{T}$ can be easily constructed using codewords of the first order RM codes. 
The root of the tree is assigned a set $\left[n \right] \triangleq \left\{1, 2, \dots, n \right\}$.
Let $\mathbf{c} = \left(c_1, c_2, \dots, c_n\right)$ be a codeword of weight $n/2$ in $\mathcal{R}\left(1, m \right)$ and let $\mathcal{S}\left(\mathbf{c} \right) = \left\{ i \in \left[n \right] \,\mid\, c_i = 1 \right\}$ be a set of non-zero positions in $\mathbf{c}$.
Since the codewords of weight $n/2$ in $\mathcal{R}\left(1, m \right)$ are incidence vectors of $(m - 1)$-dimensional subspaces  \cite[Sec.~13.4]{Sloan}, the children nodes of the root contain sets $\mathcal{S}\left(\mathbf{c_i} \right)$, $i \in \left[2n - 2 \right]$, where $\mathbf{c_1}, \dots, \mathbf{c_{2n - 2}}$ are distinct codewords of weight $n/2$ in $\mathcal{R}\left(1, m \right)$.
Let us consider a child of the root with a set $\mathcal{S}\left(\mathbf{c} \right) = \left\{s_1, s_2, \dots, s_{n/2} \right\}$ and let us consider the code that is obtained by taking those codewords of the original RM code that are zero on $\left[n \right] \setminus \mathcal{S}\left(\mathbf{c} \right)$ and deleting these positions.
Since the set $\mathcal{S}\left(\mathbf{c} \right)$ defines an $(m - 1)$-dimensional subspace, the shortened code is an RM code, whose codewords are incidence vectors of subsets in $\mathrm{EG}(m-1,2)$ \cite[Sec.~13.4]{Sloan}.
Therefore, we can apply a similar procedure to generate the children of the considered node as for the root node.
Let $\mathbf{\hat{c}} = \left(\hat{c}_1, \hat{c}_2, \dots, \hat{c}_{n/2}\right)$ be a codeword of weight $n/4$ in $\mathcal{R}\left(1, m - 1 \right)$ and let $\mathcal{\hat{S}}\left(\mathbf{\hat{c}} \right) = \left\{s_i \,\mid\, i \in \left[n/2 \right], \hat{c}_i = 1 \right\}$.
Then the children nodes of the node with the set  $\mathcal{S}\left(\mathbf{c} \right)$ contain the sets $\mathcal{\hat{S}}\left(\mathbf{\hat{c}_i} \right)$, $i \in \left[n - 2 \right]$, where $\mathbf{\hat{c}_1}, \dots, \mathbf{\hat{c}_{n - 2}}$ are distinct codewords of weight $n/4$ in $\mathcal{R}\left(1, m - 1\right)$.
This process continues until the node with a set of size $2^{m-r}$, which contains non-zero positions of a minimum weight codeword in $\mathcal{R}\left(r, m\right)$, is reached.

\begin{figure}
 \centering
 \begin{tikzpicture}[ font=\footnotesize,
 level 1/.style={level distance = 2.5cm, sibling distance = 4cm, font=\footnotesize},
 level 2/.style={level distance = 2.5cm, sibling distance = 1.1cm, font=\footnotesize}, scale=0.75]
\node [circle,draw] (z){$\left\lbrace1,2, \dots, 8 \right\rbrace$}
  child {node [circle,draw] (a1) {$\left\lbrace1,3,5,6 \right\rbrace$}
        child {node [circle,draw] (c1) {$\left\lbrace1,5 \right\rbrace$}}
     child {node {$\ldots$}  edge from parent[draw=none]}
    child {node [circle,draw] (c3) {$\left\lbrace3,5 \right\rbrace$}}
  }
  child {node [circle,draw] (a3) {$\left\lbrace1,4,5,8 \right\rbrace$}
        child {node [circle,draw] (c1) {$\left\lbrace1,5 \right\rbrace$}}
     child {node {$\ldots$}  edge from parent[draw=none]}
     child {node [circle,draw] (c6) {$\left\lbrace4,5 \right\rbrace$}}
  }
  child {node [circle,draw] (a4) {$\left\lbrace2,3,5,8 \right\rbrace$}  
        child {node [circle,draw] (c1) {$\left\lbrace2,5 \right\rbrace$}}
     child {node {$\ldots$}  edge from parent[draw=none]}
     child {node [circle,draw] (c6) {$\left\lbrace3,5 \right\rbrace$}}
  };
  \path (a3) -- (a4) node [midway, transform canvas={xshift=0pt, yshift=0pt}] {$\ldots$};
  \end{tikzpicture}
\caption{A tree that is used to enumerate all minimum weight codewords of the RM code $\mathcal{R}\left(2, 3\right)$} 
 \label{MinCodewordsTree}
 \end{figure}
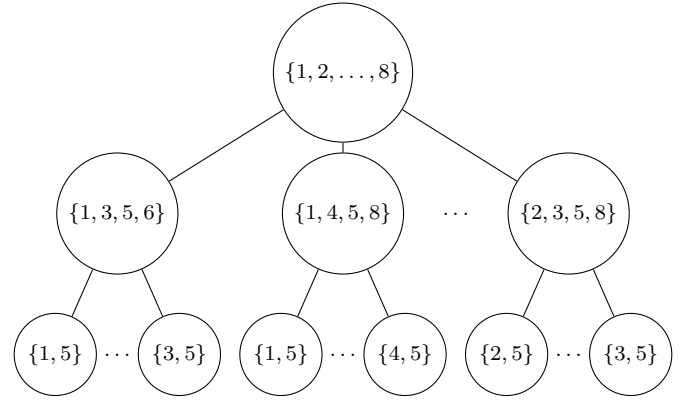


The greedy search algorithm can be implemented using the tree $\mathcal{T}$.
Recall that this function aims to find a codeword at the minimum distance from the current codeword $\mathbf{c}$ that has the best metric.
To decrease the complexity of the algorithm, we consider a depth-first search that is terminated when the first leaf is reached.
Thus, it is required to find a leaf in the tree $\mathcal{T}$ such that if we flip bits in the codeword $\mathbf{c}$ at the positions in the set of this leaf, we will get the codeword with the best metric.

Consider a leaf with a set $\mathcal{S}$ and assume that $\mathbf{c} = \left(c_1, c_2, \dots, c_{n} \right)$ is the current codeword.
If we choose this leaf as a solution, then the metric of the output codeword is calculated as 
\begin{equation}
\sum\limits_{i = 1}^{n} \left(1 - 2c_i\right) y_i - 2\sum\limits_{i \in \mathcal{S}} \left(1 - 2c_i\right) y_i,
\label{OutputMetric}
\end{equation}
where $\left(y_1, y_2, \dots, y_{n} \right)$ is a vector of input LLRs.
Since we are required to maximize the metric of the output codeword, the algorithm needs to find a leaf that minimizes the second sum of \eqref{OutputMetric}.

Consider a node of the tree $\mathcal{T}$ that contains a set $\mathcal{S} = \left\lbrace s_1, s_2, \dots, s_{\bar{n}}\right\rbrace$, $1 \leq s_1 < s_2 < \dots < s_{\bar{n}} \leq n$.
Assume that this node is not a leaf.
Consequently, it has $2\bar{n} - 2$ children nodes.
Since we consider the greedy depth-first search, the algorithm needs to move to a child node that seems best at the moment.
Here, we propose to use the following heuristic: For every child node with a set $\mathcal{\bar{S}}$, we compute a metric
\begin{equation}
2\sum\limits_{i \in \mathcal{\bar{S}}}\left(1 - 2c_i\right)y_i
 \label{TreeHeuristic}
 \end{equation}
and move to the node with the smallest metric value. 
The algorithm continues until the leaf node is reached.
The output of the algorithm is the codeword that is obtained from the current one by flipping bits at the positions in the set of the leaf node.
Observe that the leaf with minimum metric value \eqref{TreeHeuristic} provides the best solution in terms of \eqref{OutputMetric}. 

A straightforward calculation of \eqref{TreeHeuristic} for all children of a node with a set $\mathcal{S}$ takes $\mathcal{O}\left(\bar{n}^2 \right)$ time.
Since a node of the tree $\mathcal{T}$ contains a set with non-zero positions of a codeword of the first order RM code, we can improve the running time of a naive algorithm by taking advantage of the Fast Hadamard Transform (FHT) \cite{Green}.
Let $\mathbf{\bar{y}} = \left(\bar{y}_1, \bar{y}_2, \dots, \bar{y}_{\bar{n}} \right)$ be a vector such that $\bar{y}_i = \left( 1 - 2 c_{s_i}\right) y_{s_i}$ and let 
\begin{equation} 
\mathbf{H} = \begin{bmatrix}
    1 &  1 \\
    1 & -1 \\
\end{bmatrix}^{\otimes \log_2 \bar{n}},
\label{firstOrderRMCodewords}
\end{equation}
where $\mathbf{X}^{\otimes t}$ denotes $t$-times Kronecker product of the matrix $\mathbf{X}$ with itself.
Then the result of the FHT applied to the vector $\mathbf{\bar{y}}$ can be expressed in terms of matrix multiplication as $\left(h_1, h_2, \dots, h_{\bar{n}} \right) = \mathbf{\bar{y}} \mathbf{H}$.
Note that the columns of the matrix $\mathbf{H}$ correspond to half of the first-order RM code codewords.
Namely, if $\left(x_1, x_2, \dots, x_{\bar{n}} \right)^T$ is a column of the matrix $\mathbf{H}$ associated with the codeword $\mathbf{c}$, then $c_i = (1 - x_i)/2$.
The rest codewords are defined by the columns of $-\mathbf{H}$.
Therefore, the values of \eqref{TreeHeuristic} for the children nodes can be calculated as $\left(-1\right)^ah_i + h_{1}$, where $i \in \left\lbrace2, \dots \bar{n}\right\rbrace$, $a \in \left\lbrace0, 1 \right\rbrace$.
The running time of this approach is $\mathcal{O}\left(\bar{n} \log \bar{n} \right)$ \cite{Green}.

\begin{example}
Consider a similar scenario as in Example 1.
Namely, the RM code $\mathcal{R}\left(2, 3\right)$ and LLR vector \eqref{llrVectorExample}.
Assume that $\mathbf{c} = \left(0, 0, 1, 1, 1, 1, 0, 0 \right)$ is the current codeword.
The correlation metric for this codeword equals $8.44$.
Let us find a codeword at the minimum distance from $\mathbf{c}$ that has the largest correlation using the depth-first algorithm described above. 
In the first step, the algorithm computes the metric \eqref{TreeHeuristic} for each child of the root and moves to the node with the set $\mathcal{S} = \left\{2, 4, 6, 8 \right\}$ that has the smallest metric, namely, $-9.1$.
Then, the algorithm applies the same procedure to the node with the set $\mathcal{S}$ and selects a node with the set $\left\{4,6 \right\}$.
The metric value for this node equals $-16.64$ and, from \eqref{OutputMetric}, it follows that if we flip bits at positions 4 and 6, we will get a codeword $\mathbf{c^\prime} = \left(0, 0, 1, 0, 1, 0, 0, 0 \right)$ with a metric of $25.04$.
Note that $\mathbf{c^\prime}$ and $\mathbf{c_1}$ found in Example 1 using an optimal algorithm are the same.
\end{example}

The version of the greedy algorithm described above is able to find only one solution.
If this solution is in the set $\mathcal{C}$ (see line 3 of Algorithm \ref{GraphSearchAlg}), then the break statement in line 12 of Algorithm \ref{GraphSearchAlg} will terminate the for-loop in lines 9 -- 23.
We observe that in this case, the algorithm terminates before the correct codeword is found, significantly decreasing the error-rate performance of the decoder.
Therefore, one requires to modify the greedy approach described above in such a way that it can visit multiple leaves and select one that generates the codeword $\mathbf{c}$, $\mathbf{c} \notin \mathcal{C}$.
For instance, one can choose $l$ children of the root of the tree $\mathcal{T}$ and run the depth-first search for each child independently.
Then the algorithm can generate up to $l$ different codewords and choose one that is not in $\mathcal{C}$.
We call parameter $l$ a search breadth of the decoding algorithm.

Note that the complexity of the decoding algorithm depends on the value of the search breadth.
If the search breadth is too small, then the algorithm has a high probability of termination before reaching the correct codeword.
On the other hand, a large search breadth significantly increases the running time of the algorithm.
We found that a good trade-off here is to run the search with a relatively small $l$, for instance, $l = 8$.
However, if all generated codewords are in the set $\mathcal{C}$, then we allow running the depth-first search for $\bar{l}$ additional children of the root of the tree $\mathcal{T}$.
This operation is allowed to be run only $s$ times for the one launch of the decoder.
If the algorithm is not able to find a codeword $\mathbf{c}$, $\mathbf{c} \notin \mathcal{C}$, for the given search breadth $l$ and the algorithm has already run the depth-first search for  $\bar{l}$ additional children $s$ times, then decoding is terminated.
This limitation is introduced to decrease the average computational complexity of the algorithm.
Note that if the search for each child node of the root runs sequentially, one can stop the search procedure as soon as a codeword $\mathbf{c}$, $\mathbf{c} \notin \mathcal{C}$, that improves the metric of the current best codeword is found.

\begin{example}
Let us consider the RM code $\mathcal{R}\left(2, 3\right)$ and let us decode LLR vector \eqref{llrVectorExample} using a greedy approach discussed above.
Let $N = 10$, $l = 2$, $\bar{l} = 1$, $s = 1$ and assume that $\mathbf{c} = \left(0, 0, 1, 1, 1, 1, 0, 0 \right)$ is the output of the recursive decoder.
As in Example 2, in the first iteration, the algorithm finds a vector $\mathbf{c_1} = \left(0, 0, 1, 0, 1, 0, 0, 0 \right)$ with a metric of 25.08 by applying the depth-first search to the child node with the best metric \eqref{TreeHeuristic}.
Since the metric of $\mathbf{c_1}$ is greater than the metric of $\mathbf{c}$, the algorithm does not run the depth-first search for the second-best child and moves to the next iteration. 
In the second iteration, the best child of the root has the set $\left\{1,4,5,8 \right\}$ and the depth-first search for this node returns the set $\left\{5,8 \right\}$.
This set generates the codeword $\mathbf{c_2} = \left(0, 0, 1, 0, 0, 0, 0, 1 \right)$ with a metric of 28.72. 
As in the first iteration, the depth-first search for the best child results in a new codeword with a better metric than that of the codewords $\mathbf{c}$ and $\mathbf{c_1}$.
Therefore, the algorithm terminates the second iteration and moves to the next one.
In the third iteration, the two best children of the root are defined by sets $\left\{1,4,5,8 \right\}$ and $\left\{5,6,7,8 \right\}$.
The depth-first search for both these nodes results in the set $\left\{5,8 \right\}$ generating the codeword $\mathbf{c_1}$ that has been already found in the first iteration.
Since the search for the $l$ best children does not result in a new codeword, the algorithm runs the depth-first search for the third-best child defined by the set $\left\{1, 2,5,6 \right\}$.
This search results in the set $\left\{1,5\right\}$ and the codeword $\mathbf{c_3} = \left(1, 0, 1, 0, 1, 0, 0, 1 \right)$ with a metric of 23.38. 
In the fourth iteration, as in the previous one, the search for the two best children of the root defined by sets $\left\{1,4,5,8 \right\}$ and $\left\{1,2,5,6 \right\}$ results in the codeword $\mathbf{c_2}$ that has been already found.
Since the search has been already performed for additional $\bar{l} $ nodes $s$ times, 
the algorithm terminates here and returns a codeword with the best metric from the set $\left\{\mathbf{c}, \mathbf{c_1}, \mathbf{c_2}, \mathbf{c_3} \right\}$.
Note that the result in each iteration is the same as in Example 1 that considers an optimal algorithm to find the next codeword.
\end{example}

The formal description of the proposed greedy search approach is shown in Algorithm \ref{NextStepGreedy}. 
This algorithm uses the \texttt{NextStepGreedyRec} function presented in Algorithm \ref{NextStepGreedyRec}.
The \linebreak \texttt{NextStepGreedyRec} function performs a depth-first search to find a leaf that minimizes metric \eqref{TreeHeuristic}.
At each recursive step, the function \texttt{NextStepGreedyRec} chooses a node of the tree $\mathcal{T}$ that minimizes  \eqref{TreeHeuristic} (see lines 12 -- 22).
Condition in line 1 checks whether a leaf node has been reached.
If so, then the algorithm flips bits of the current codeword $\mathbf{c}$ in the coordinates from the vector $\mathbf{v}$ (see lines 3 -- 5).
Note that, for simplicity, we use vectors instead of sets to store positions of bits to be flipped.
If the new codeword is not in the set $\mathcal{C}$, then the new codeword and the value of \eqref{TreeHeuristic} are returned.
Otherwise, the algorithm returns $\infty$ instead of the value of metric \eqref{TreeHeuristic}.

 \begin{algorithm}[t]
 \DontPrintSemicolon
\SetAlgoLined
\KwIn{RM code parameters $r$ and $m$, a vector of LLRs $\mathbf{y}$, a codeword $\mathbf{c}$, a set of codewords $\mathcal{C}$, search breadth $l$ and $\bar{l}$}
\KwOut{A codeword $\mathbf{res}$, a metric $M$ of the codeword $\mathbf{res}$, a flag $ec$ whether the algorithm did not find a codeword using $l$ attempts}

$\mathbf{res} \gets \mathbf{c}$, $M_{\mathbf{c}} \gets 0$, $M \gets \infty$, $ec \gets 0$ \;
\For {$i = 1, 2, \dots, 2^m$} {
$M_{\mathbf{c}} \gets M_{\mathbf{c}} + \left(1 - 2\mathbf{c}[i]\right)\mathbf{y}[i]$ \;
}
Let $\mathbf{c}^i, 1 \leq i \leq 2^{m+1} - 2$, be all codewords of the code $\mathcal{R}\left(1, m\right)$ with the Hamming weight $2^{m-1}$\;
Let $\mathbf{a}$ be a vector of length $2^{m+1} - 2$, $\mathbf{a}[i] \gets 2 \cdot \sum\limits_{j, \mathbf{c}^i[j] = 1}\left(1 - 2\mathbf{c}\left[j\right]\right) \mathbf{y}[j]$ \;
Let $\mathbf{k}$ be a vector that index the vector of metrics $\mathbf{a}$ in ascending order \;
\For {$i = 1, 2, \dots, l + \bar{l}$}{
\If {$i > l$}{
\uIf {$M = \infty$}{
$ec \gets 1$ \;
}\Else {
\Break
}
}
Let $\mathbf{v}$ be a vector of size $2^{m-1}$ \;
$h \gets 1$ \;
\For {$j = 1, 2, \dots, 2^m$}{
\lIf {$\mathbf{c}^{\mathbf{k}[i]}[j] = 1$}{
$\mathbf{v}[h] \gets j$, $h \gets h + 1$ 
}
}
$\mathbf{c}^\prime, M^\prime \gets \texttt{NextStepGreedyRec}(r-1, m-1, \mathbf{y}, \mathbf{c}, \mathcal{C}, \mathbf{v}, \mathbf{a}\left[\mathbf{k}[i]\right])$  \;
\lIf {$M^\prime < M$}{ $M \gets M^\prime$, $\mathbf{res} \gets \mathbf{c}^\prime$}
\lIf {$M < 0$}{ {\Break}
}
}
\uIf {$M = \infty$}{
$M \gets -\infty$ \;
}\Else{
 $M \gets M_\mathbf{c} - M^\prime$\;
 }
\Return $\mathbf{res}, M, ec$\;
 \caption{The \texttt{NextStepGreedy} function} 
 \label{NextStepGreedy}
 \end{algorithm} 
 
 \begin{algorithm}[!t]
 \DontPrintSemicolon
\SetAlgoLined
\KwIn{RM code parameters $r$ and $m$, a vector of LLRs $\mathbf{y}$, a codeword $\mathbf{c}$, a set of codewords $\mathcal{C}$, a vector of coordinates $\mathbf{v}$, a local metric value $M^\prime$}
\KwOut{A codeword $\mathbf{res}$, a metric $M$ that is used to find $\mathbf{res}$}
\If {$r = 0$}{
 $\mathbf{res} \gets \mathbf{c}$ \;
\For {$i = 1, 2, \dots, 2^m$}{
 $\mathbf{res}[\mathbf{v}[i]] \gets \mathbf{res}[\mathbf{v}[i]] \oplus 1$ \;
}
\uIf {$\mathbf{res} \in \mathcal{C}$}{
\Return $\mathbf{res}, \infty$ \;
}\Else {
\Return $\mathbf{res}, M^\prime$ \;
}
}

Let $\mathbf{c}^i, 1 \leq i \leq 2^{m+1} - 2$, be all codewords of the code $\mathcal{R}\left(1, m\right)$ with the Hamming weight $2^{m-1}$ \;
Let $\mathbf{a}$ be a vector of length $2^{m+1} - 2$, $\mathbf{a}[i] \gets 2 \cdot \sum\limits_{j, \mathbf{c}^i[j] = 1}\left(1 - 2\mathbf{c}\left[\mathbf{v}\left[j\right]\right]\right)\mathbf{y}[\mathbf{v}\left[j\right]]$ \;
Let $k$ be an index of the minimum value in $\mathbf{a}$ \;

Let $\mathbf{\hat{v}}$ be a vector of size $2^{m-1}$ \;
$h \gets 1$ \;
\For {$j = 1, 2, \dots, 2^m$}{
\If {$\mathbf{c}^{k}[j] = 1$}{
$\mathbf{\hat{v}}[h] \gets \mathbf{v}[j]$, $h \gets h + 1$ \;
}
}

\Return $\texttt{NextStepGreedyRec}(r - 1, m - 1, \mathbf{y}, \mathbf{c}, \mathcal{C}, \mathbf{\hat{v}}, \mathbf{a}[k])$ 

 \caption{The \texttt{NextStepGreedyRec} function} 
 \label{NextStepGreedyRec}
 \end{algorithm} 

The \texttt{NextStepGreedy} function calculates metric \eqref{TreeHeuristic} for the children of the root node of the tree $\mathcal{T}$ (see lines 5 -- 7).
Then it uses the \texttt{NextStepGreedyRec} function to perform a depth-first search for $l$ children with the smallest metric values.
Note that $l + \bar{l}$ must be less than or equal to $2^{m+1} - 2$.
If a new codeword has been found during the first $l$ iterations, then the loop in lines 8 -- 24 is terminated (see lines 9 -- 15).
If additional $\bar{l}$ nodes are required to find a new codeword, then the algorithm sets a flag $ec$ to one.
This flag is used to count how many times the algorithm used the option to run the depth-first search for additional $\bar{l}$ nodes.
If the \texttt{NextStepGreedy} function returns $ec$ that equals one $s$ times during decoding of the received vector, then the graph search algorithm prohibits the \texttt{NextStepGreedy} function from visiting $\bar{l}$ extra nodes.
For instance, it can be done by setting $\bar{l}$ to be equal to zero.
Note that the loop in lines 8 -- 24 is also terminated if a codeword with a better metric has been found (see line 23).
If the \texttt{NextStepGreedy} function is not able to find a codeword that is not in $\mathcal{C}$, then the function returns the input codeword $\mathbf{c}$ and $-\infty$ as a metric value.
The formal description of the graph search algorithm that uses the \texttt{NextStepGreedy} function is presented in Algorithm \ref{GreedyGraphSearchAlg}.

\begin{algorithm}[!t]
 \DontPrintSemicolon
\SetAlgoLined
\KwIn{RM code parameters $r$ and $m$, a vector of LLRs $\mathbf{y}$, a maximum number of iterations $N$, a flag $t$ whether to use the CRC to terminate the algorithm, search breadth $l$ and $\bar{l}$, a number of extra $\bar{l}$ attempts to find a codeword $s$}
\KwOut{A codeword $\mathbf{res}$}
Set $\mathcal{C}$ to be an empty set of codewords\;
Let $\mathbf{c}$ be the result of recursive decoding of the vector $\mathbf{y}$ \;
Add $\mathbf{c}$ to $\mathcal{C}$ \;
$\mathbf{res} \gets \mathbf{c}$ \;
$M \gets 0$ \;
\For {$i = 1, 2, \dots, 2^m$} {
$M \gets M + \left(1 - 2\mathbf{c}[i]\right)\mathbf{y}[i]$ \tcp*{Compute the metric for the codeword $\mathbf{c}$.}
}
\For {$i = 1, 2, \dots, N$}{
$\mathbf{c^\prime}, M^\prime, ec \gets \texttt{NextStepGreedy}\left(r, m, \mathbf{y}, \mathbf{c}, \mathcal{C}, l, \bar{l}\right)$ \;
\lIf {$ec = 1$} {$s \gets s - 1$}
\lIf {$s = 0$} {$\bar{l} \gets 0$}
\If {$M^\prime = -\infty$} {
\Break \tcp*{A codeword $\mathbf{c^\prime}$ such that $\mathbf{c^\prime} \notin \mathcal{C}$ is not found.}
}
$\mathbf{c} \gets \mathbf{c^\prime}$ \tcp*{Move to an adjacent node.}
Add $\mathbf{c}$ to $\mathcal{C}$ \;
\If {$M^\prime > M$}{ \tcp*{Save the current codeword if it is better than one found so far.}
$\mathbf{res} \gets \mathbf{c}$ \;
$M \gets M^\prime$ \;
\If {$t$ \textnormal{and} $\mathbf{res}$ \textnormal{satisfies the CRC}}{
\Break 
}
}
}
\Return{$\mathbf{res}$}
 \caption{A greedy version of the graph search decoding algorithm} 
 \label{GreedyGraphSearchAlg}
 \end{algorithm} 
 
We now consider the running time and the space complexity of Algorithm \ref{GreedyGraphSearchAlg}. 
First, we consider the computational complexity and space requirements of the functions \texttt{NextStepGreedyRec} and \texttt{NextStepGreedy}.
\begin{lemma}
The running time of the \texttt{NextStepGreedyRec} function is \linebreak $\mathcal{O}\left(n \log \left(\max \left\lbrace N, n \right\rbrace \right) \right)$, $n = 2^m$. 
\end{lemma}
\begin{IEEEproof}
Consider the base case in lines 2 -- 10 of Algorithm \ref{NextStepGreedyRec}.
The algorithm flips at most $2^{m-r}$ bits in line 4 and checks whether a new codeword is in the set $\mathcal{C}$.
This check can be implemented efficiently using a red-black tree guaranteeing that the worst-case running time of the search operation is $\mathcal{O}\left(\log N \right)$ \cite[Sec.~13]{Cormen}.
Since the per-bit comparison of two codewords takes linear time, the base case running time is $\mathcal{O}\left(n \log N \right)$.

The elements of the array $\mathbf{a}$ (see line 13) can be calculated using the FHT that takes time $\mathcal{O}\left(n \log n \right)$.
Since the algorithm computes the array $\mathbf{a}$ at each recursive step, the total running time of this operation is
\begin{equation}
\mathcal{O}\left(\sum\limits_{i = 0}^{r - 1}2^{m - i}\log{}\left(2^{m - i}\right)\right) = \mathcal{O}\left(n \log n \right).
\end{equation}

Observe that the total running time of the for loop in lines 17 -- 21 and the minimum value search in line 14 is $\mathcal{O}\left(n\right)$.
Thus, the \texttt{NextStepGreedyRec} function takes \linebreak $\mathcal{O}\left(n \log\left(\max \left\lbrace N, n \right\rbrace \right) \right)$ time.
\end{IEEEproof}
\begin{lemma}
The space complexity of the \texttt{NextStepGreedyRec} function is $\mathcal{O}\left(n^2 \right)$, $n = 2^m$. 
\end{lemma}
\begin{IEEEproof}
Observe that the base case and lines 14 -- 21 of Algorithm \ref{NextStepGreedyRec} use linear space. 
The algorithm also stores all codewords of $\mathcal{R}\left(1, h\right)$, $h \in \left\lbrace m - r + 1, m - r + 2, \dots, m\right\rbrace$, with the Hamming weight $2^{h-1}$ to select the codeword with the smallest metric efficiently (line 14 and line 18).
Since the codewords of the first order RM codes can be defined recursively using \eqref{firstOrderRMCodewords}, it follows that it is enough to store only codewords of $\mathcal{R}\left(1, m\right)$ that takes $\mathcal{O}\left(n^2 \right)$ space.
Therefore, the algorithm uses $\mathcal{O}\left(n^2 \right)$ space.
\end{IEEEproof}
\begin{lemma} 
The best-case time of the \texttt{NextStepGreedy} function is $\mathcal{O}\left(n \log \left(\max \left\lbrace N, n \right\rbrace \right) \right)$. The worst-case time of the \texttt{NextStepGreedy} function is $\mathcal{O}\left((l + \bar{l}) \cdot n \log \left(\max \left\lbrace N, n \right\rbrace \right) \right)$.
\end{lemma}
\begin{IEEEproof}
Consider line 6 and line 7 of Algorithm \ref{NextStepGreedy}. 
As in Algorithm \ref{NextStepGreedyRec}, the vector $\mathbf{a}$ is computed in $\mathcal{O}\left(n \log n \right)$. 
Sorting in line 7 also takes $\mathcal{O}\left(n \log n \right)$ \cite[Part~II]{Cormen}. 
Observe that all the rest calculations require linear time exclusive of line 21.

In line 21, the algorithm uses the \texttt{NextStepGreedyRec} function to find a new codeword.
In the best case, this function is called only once.
Since this function runs in $\mathcal{O}\left(n \log \left( \max \left\lbrace N, n \right\rbrace \right) \right)$, the best-case running time of the \texttt{NextStepGreedy} function is $\mathcal{O}\left(n \log  \left( \max \left\lbrace N, n \right\rbrace \right) \right)$.
In the worst case, the \texttt{NextStepGreedyRec} function is called $l + \bar{l}$ times.
Thus, the worst-case running time is $\mathcal{O}\left(\left(l + \bar{l}\right) \cdot n \log  \left( \max \left\lbrace N, n \right\rbrace \right) \right)$.
\end{IEEEproof}
\begin{lemma}
The space complexity of the \texttt{NextStepGreedy} function is $\mathcal{O}\left(n^2 \right)$. 
\end{lemma}
\begin{IEEEproof}
As the \texttt{NextStepGreedyRec} function, the \texttt{NextStepGreedy} function stores codewords of $\mathcal{R}\left(1, m\right)$ that take $\mathcal{O}\left(n^2 \right)$ space.
Observe that the auxiliary variables use linear space.
Therefore, the space requirement of the \texttt{NextStepGreedy} function is indeed $\mathcal{O}\left(n^2 \right)$.
\end{IEEEproof}
 
Note that Algorithm \ref{NextStepGreedy} allows for parallel implementation.
One can run the first $l$ iterations of the for loop in lines 8 -- 24 of the algorithm in parallel and select a codeword with the best metric.
If all codewords returned by $l$ calls of the \texttt{NextStepGreedyRec} function are in the set $\mathcal{C}$, then the algorithm can run $\bar{l}$ more iterations in parallel.
Consequently, the best-case running time of the parallel implementation of the \texttt{NextStepGreedy} function is $\mathcal{O}\left(l \cdot n \log  \left( \max \left\lbrace N, n \right\rbrace \right) \right)$, while the worst-case time is still $\mathcal{O}\left(\left(l + \bar{l}\right) \cdot n \log  \left( \max \left\lbrace N, n \right\rbrace \right) \right)$.
The space complexity of the parallel implementation of the \texttt{NextStepGreedy} function is $\mathcal{O}\left(n^2 \right)$.
Indeed, the functions \texttt{NextStepGreedy} and \texttt{NextStepGreedyRec} only read information from the memory that stores codewords of $\mathcal{R}\left(1, h\right)$, $h \in \left\lbrace m - r + 1, m - r + 2, \dots, m\right\rbrace$.
Thus, only one instance of these codewords can be stored in memory.
Moreover, since $l + \bar{l} \leq 2^{m+1} - 2$, the auxiliary variables take $\mathcal{O}\left(n^2 \right)$ space.
Consequently, the space requirement of the sequential and the parallel versions of the algorithm is the same. 

\begin{theorem}
The worst-case running time of Algorithm \ref{GreedyGraphSearchAlg} is $\mathcal{O}\left(L \cdot n \log  \left( \max \left\lbrace N, n \right\rbrace \right) \right)$, where $L = N \cdot l + s \cdot \bar{l}$.
\end{theorem}
\begin{IEEEproof}
Let us consider the running time of the main loop in lines 9 -- 25 of Algorithm \ref{GreedyGraphSearchAlg}.
Recall that we assume that the set $\mathcal{C}$ is implemented using a red-black tree.
Therefore, the insertion operation in line 17 of Algorithm \ref{GreedyGraphSearchAlg} takes $\mathcal{O}\left(n \log N \right)$ time \cite[Sec.~13]{Cormen}.
In addition, we assume that a reasonable length CRC is used.
Under this assumption, we can conclude that line 21 of Algorithm \ref{GreedyGraphSearchAlg} takes a shorter running time than line 10.
In the worst case, the break statement does not terminate the for loop in lines 9 -- 25 of Algorithm \ref{GreedyGraphSearchAlg} and the algorithm uses the possibility to execute $\bar{l}$ additional iterations of the for loop in lines 8 -- 24 of Algorithm \ref{NextStepGreedy} $s$ times.
Thus, the worst-case time of the main loop is $\mathcal{O}\left(\left(N \cdot l + s \cdot \bar{l} \right) n \log  \left( \max \left\lbrace N, n \right\rbrace \right) \right)$.

Since recursive decoding takes $\mathcal{O}\left(n \log n\right)$ time \cite{recDec}, Algorithm \ref{GreedyGraphSearchAlg} takes \linebreak $\mathcal{O}\left(L \cdot n \log  \left( \max \left\lbrace N, n \right\rbrace \right) \right)$ worst-case time.
\end{IEEEproof}
 \begin{theorem}
Algorithm \ref{GreedyGraphSearchAlg} uses $\mathcal{O}\left(\max \left\lbrace N, n \right\rbrace \cdot n\right)$ space.
\end{theorem}
\begin{IEEEproof}
On the one hand, the algorithm needs to store the set of codewords $\mathcal{C}$.
Assume that this set is implemented using a red-black tree.
Thus, it takes $\mathcal{O}\left( N \cdot n\right)$ space \cite[Sec.~13]{Cormen}.
On the other hand, the algorithm calls the \texttt{NextStepGreedy} function that uses $\mathcal{O}\left( n^2\right)$ space.
Since the auxiliary variables use $\mathcal{O}\left(1\right)$ space and recursive decoding uses linear space, the space complexity of Algorithm \ref{GreedyGraphSearchAlg} is $\mathcal{O}\left(\max \left\lbrace N, n \right\rbrace \cdot n\right)$.
\end{IEEEproof}

\section{Simulation results}
We consider RM codes of length 256 and 512 and the second-order RM codes of length 1024 and 2048.
The graph search decoder (referred to as GS) is compared to RPA decoding without list \cite{MinYe} and the recursive permutation list decoder (referred to as RPL) \cite{Dumer}.
We assume that the graph search decoder uses the sequential implementation of Algorithm \ref{NextStepGreedy} and runs with parameters $l = 8$, $\bar{l} = 8$, and $s = 5$.
Note that the simulation results for the recursive permutation list algorithm are obtained using an open source project available online \cite{DumerSource}.

In Fig. \ref{awgn}, we present the block error rate (BLER) of the considered decoders for a BI-AWGN channel.
For graph search decoding, we consider two scenarios. 
In the first scenario, plotted as solid blue curves, the number of iterations is chosen in such a way that the worst-case\footnote{We consider the worst-case running time of the graph search algorithm as in Theorem 1, i.e. the break statement does not terminate the main loop of Algorithm \ref{GreedyGraphSearchAlg} and each call of the \texttt{NextStepGreedy} function takes the worst-case running time. } running time of the proposed algorithm is similar to that of recursive permutation list decoding with the list of size $L$.
In the second scenario, plotted as dashed orange curves, we consider the number of iterations required to perform within 0.1 dB from the ML decoding lower bound that is constructed using an approach described in \cite{Dumer}.
Namely, we compare the metric of the output codeword of the graph search decoder with a large running time to that of the transmitted codeword.
If the metric of the decoder's output is better, then the ML decoder will also return an incorrect codeword.
Consequently, the ratio of such events gives a lower bound on the ML decoding performance.

\begin{figure*}[tb]

\centering
  \subfloat[$\mathcal{R}\left(3, 8\right)$\label{1a}]{%
       \begin{tikzpicture}[ scale = 0.68 ]
	\begin{semilogyaxis}[xlabel={$E_b/N_0 \left(\textrm{dB}\right)$}, ylabel=Block error rate, xmin=0.5, xmax=2.6, ymax=1e-1, ymin=1e-4, grid=both, yminorgrids=true,legend pos=south west, legend style={font=\footnotesize}, legend cell align=left, minor tick num = 4]	
\addplot[solid, thick, mark=x, mark options={solid}, red] coordinates {
(0.0000,3.747870e-01)
(0.2000,2.876028e-01)
(0.4000,2.109294e-01)
(0.6000,1.461459e-01)
(0.8000,9.536496e-02)
(1.0000,5.802583e-02)
(1.2000,3.353732e-02)
(1.4000,1.834207e-02)
(1.6000,8.831621e-03)
(1.8000,4.201972e-03)
(2.0000,1.624078e-03)
(2.2000,6.254636e-04)
(2.4000,2.339956e-04)
(2.6000,8.040431e-05)
};
\addlegendentry{RPL $L = 256$}
\addplot[solid, thick, mark=star, mark options={solid}, violet] coordinates {
(0.0000,5.053125e-01)
(0.2000,4.006452e-01)
(0.4000,3.116129e-01)
(0.6000,2.215625e-01)
(0.8000,1.578125e-01)
(1.0000,1.020968e-01)
(1.2000,6.067416e-02)
(1.4000,3.646667e-02)
(1.6000,1.797872e-02)
(1.8000,8.932384e-03)
(2.0000,3.546099e-03)
(2.2000,1.602564e-03)
(2.4000,6.215040e-04)
(2.6000,2.743142e-04)
(2.8000,1.046859e-04)
(3.0000,4.977601e-05)
};
\addlegendentry{RPA}
\addplot[solid, thick, mark=+, mark options={solid}, blue] coordinates {
(0.0000,3.115156e-01)
(0.2000,2.263125e-01)
(0.4000,1.583438e-01)
(0.6000,1.034687e-01)
(0.8000,6.395238e-02)
(1.0000,3.752381e-02)
(1.2000,2.008065e-02)
(1.4000,1.024779e-02)
(1.6000,4.635036e-03)
(1.8000,1.967742e-03)
(2.0000,8.525180e-04)
(2.2000,3.177843e-04)
(2.4000,1.150055e-04)
(2.6000,4.571984e-05)
};
\addlegendentry{GS $N = 40$}
\addplot[dashed, thick, mark=square, mark options={solid}, orange] coordinates {
(0.0000,2.807188e-01)
(0.2000,2.005625e-01)
(0.4000,1.360938e-01)
(0.6000,8.790625e-02)
(0.8000,5.087500e-02)
(1.0000,2.990323e-02)
(1.2000,1.586170e-02)
(1.4000,7.875776e-03)
(1.6000,3.191489e-03)
(1.8000,1.506494e-03)
(2.0000,5.943152e-04)
(2.2000,2.500000e-04)
(2.4000,9.695817e-05)
};
\addlegendentry{GS $N = 64$}
\addplot[dotted, thick, mark=o, mark options={solid}, black] coordinates {
(0.0000,2.374063e-01)
(0.2000,1.671250e-01)
(0.4000,1.114375e-01)
(0.6000,7.059375e-02)
(0.8000,4.087500e-02)
(1.0000,2.317742e-02)
(1.2000,1.277660e-02)
(1.4000,6.155280e-03)
(1.6000,2.351064e-03)
(1.8000,1.116883e-03)
(2.0000,4.832041e-04)
(2.2000,2.222222e-04)
(2.4000,9.125475e-05)
};
\addlegendentry{ML lower bound}

	\end{semilogyaxis}
	\end{tikzpicture}
       }
    \hfill
  \subfloat[$\mathcal{R}\left(4, 8\right)$\label{1b}]{%
        \begin{tikzpicture}[ scale = 0.68 ]
	\begin{semilogyaxis}[xlabel={$E_b/N_0 \left(\textrm{dB}\right)$}, ylabel=Block error rate, xmin=1.5, xmax=3.5, ymax=1e-1, ymin=1e-4, grid=both, yminorgrids=true,legend pos=south west, legend style={font=\footnotesize}, legend cell align=left, minor tick num = 4]	
\addplot[solid, thick, mark=x, mark options={solid}, red] coordinates {
(1.4000,2.726666e-01)
(1.6000,1.815503e-01)
(1.8000,1.122649e-01)
(2.0000,6.444717e-02)
(2.2000,3.440523e-02)
(2.4000,1.611701e-02)
(2.6000,7.037907e-03)
(2.8000,2.829216e-03)
(3.0000,1.114703e-03)
(3.2000,3.024668e-04)
(3.4000,9.780037e-05)
};
\addlegendentry{RPL $L = 256$}
\addplot[solid, thick, mark=star, mark options={solid}, violet] coordinates {
(1.6000,3.123529e-01)
(1.8000,2.252174e-01)
(2.0000,1.351351e-01)
(2.2000,8.079365e-02)
(2.4000,4.424779e-02)
(2.6000,2.216814e-02)
(2.8000,1.043659e-02)
(3.0000,4.227848e-03)
(3.2000,1.531394e-03)
(3.4000,5.851375e-04)
(3.6000,1.550000e-04)
(3.8000,4.500000e-05)
};
\addlegendentry{RPA}
\addplot[solid, thick, mark=+, mark options={solid}, blue] coordinates {
(1.4000,2.228750e-01)
(1.6000,1.405625e-01)
(1.8000,8.021875e-02)
(2.0000,4.359375e-02)
(2.2000,2.089247e-02)
(2.4000,9.549020e-03)
(2.6000,3.920635e-03)
(2.8000,1.525641e-03)
(3.0000,4.679487e-04)
(3.2000,1.615385e-04)
(3.4000,4.931893e-05)
};
\addlegendentry{GS $N = 56$}
\addplot[dashed, thick, mark=square, mark options={solid}, orange] coordinates {
(1.4000,1.756875e-01)
(1.6000,1.046875e-01)
(1.8000,5.740625e-02)
(2.0000,2.933871e-02)
(2.2000,1.378723e-02)
(2.4000,5.962162e-03)
(2.6000,2.265957e-03)
(2.8000,8.418972e-04)
(3.0000,3.252688e-04)
(3.2000,1.328588e-04)
(3.4000,4.267990e-05)
};
\addlegendentry{GS $N = 128$}
\addplot[dotted, thick, mark=o, mark options={solid}, black] coordinates {
(1.4000,1.410937e-01)
(1.6000,8.368750e-02)
(1.8000,4.484375e-02)
(2.0000,2.225806e-02)
(2.2000,1.073404e-02)
(2.4000,4.870270e-03)
(2.6000,1.968085e-03)
(2.8000,7.509881e-04)
(3.0000,2.983871e-04)
(3.2000,1.245552e-04)
(3.4000,4.218362e-05)
};
\addlegendentry{ML lower bound}

	\end{semilogyaxis}
	\end{tikzpicture}
        }
   \hfill
  \subfloat[$\mathcal{R}\left(5, 8\right)$\label{1c}]{%
        \begin{tikzpicture}[ scale = 0.68 ]
	\begin{semilogyaxis}[xlabel={$E_b/N_0 \left(\textrm{dB}\right)$}, ylabel=Block error rate, xmin=3.4, xmax=5.2, ymax=1e-1, ymin=1e-4, grid=both, yminorgrids=true,legend pos=south west, legend style={font=\footnotesize}, legend cell align=left, minor tick num = 4]	
\addplot[solid, thick, mark=x, mark options={solid}, red] coordinates {
(3.0000,2.493873e-01)
(3.2000,1.641556e-01)
(3.4000,9.991604e-02)
(3.6000,5.563311e-02)
(3.8000,2.885820e-02)
(4.0000,1.373152e-02)
(4.2000,5.934978e-03)
(4.4000,2.550350e-03)
(4.6000,9.688653e-04)
(4.8000,3.118075e-04)
(5.0000,1.243728e-04)
(5.2000,3.568912e-05)
};
\addlegendentry{RPL $L = 64$}

\addplot[solid, thick, mark=+, mark options={solid}, blue] coordinates {
(3.0000,3.395064e-01)
(3.2000,2.351452e-01)
(3.4000,1.521713e-01)
(3.6000,9.109787e-02)
(3.8000,4.996831e-02)
(4.0000,2.546264e-02)
(4.2000,1.216667e-02)
(4.4000,5.031447e-03)
(4.6000,1.950092e-03)
(4.8000,6.880416e-04)
(5.0000,2.305141e-04)
(5.2000,7.026477e-05)
};
\addlegendentry{GS $N = 12$}
\addplot[dashed, thick, mark=square, mark options={solid}, orange] coordinates {
(3.0000,1.898125e-01)
(3.2000,1.169687e-01)
(3.4000,6.390323e-02)
(3.6000,3.434375e-02)
(3.8000,1.673437e-02)
(4.0000,7.342246e-03)
(4.2000,3.030303e-03)
(4.4000,1.437722e-03)
(4.6000,5.663900e-04)
(4.8000,2.219178e-04)
(5.0000,7.610854e-05)
};
\addlegendentry{GS $N = 128$}
\addplot[dotted, thick, mark=o, mark options={solid}, black] coordinates {
(3.0000,1.758125e-01)
(3.2000,1.082188e-01)
(3.4000,5.925806e-02)
(3.6000,3.212500e-02)
(3.8000,1.545312e-02)
(4.0000,6.989305e-03)
(4.2000,2.954545e-03)
(4.4000,1.387900e-03)
(4.6000,5.622407e-04)
(4.8000,2.191781e-04)
(5.0000,7.544672e-05)
};
\addlegendentry{ML lower bound}

	\end{semilogyaxis}
	\end{tikzpicture}
        }
    \\
  \subfloat[$\mathcal{R}\left(2, 9\right)$\label{1d}]{%
       \begin{tikzpicture}[ scale = 0.68 ]
	\begin{semilogyaxis}[xlabel={$E_b/N_0 \left(\textrm{dB}\right)$}, ylabel=Block error rate, xmin=0, xmax=2.4, ymax=1e-1, ymin=1e-4, grid=both, yminorgrids=true,legend pos=south west, legend style={font=\footnotesize}, legend cell align=left, minor tick num = 4]	
\addplot[solid, thick, mark=x, mark options={solid}, red] coordinates {
(0.0000,1.159694e-01)
(0.2000,8.287842e-02)
(0.4000,5.669371e-02)
(0.6000,3.734713e-02)
(0.8000,2.395504e-02)
(1.0000,1.426853e-02)
(1.2000,7.809662e-03)
(1.4000,4.582430e-03)
(1.6000,2.230483e-03)
(1.8000,1.132111e-03)
(2.0000,5.471807e-04)
(2.2000,2.263108e-04)
(2.4000,9.606385e-05)
};
\addlegendentry{RPL $L = 256$}
\addplot[solid, thick, mark=star, mark options={solid}, violet] coordinates {
(0.0000,1.071875e-01)
(0.2000,7.162791e-02)
(0.4000,4.732877e-02)
(0.6000,3.240506e-02)
(0.8000,1.988506e-02)
(1.0000,1.177590e-02)
(1.2000,6.316940e-03)
(1.4000,3.190045e-03)
(1.6000,1.589404e-03)
(1.8000,8.130081e-04)
(2.0000,3.771712e-04)
(2.2000,1.581809e-04)
(2.4000,5.316578e-05)
};
\addlegendentry{RPA}
\addplot[solid, thick, mark=+, mark options={solid}, blue] coordinates {
(0.0000,9.014516e-02)
(0.2000,6.254839e-02)
(0.4000,4.245161e-02)
(0.6000,2.737097e-02)
(0.8000,1.646774e-02)
(1.0000,9.666667e-03)
(1.2000,5.645161e-03)
(1.4000,2.819149e-03)
(1.6000,1.506494e-03)
(1.8000,7.110390e-04)
(2.0000,3.447099e-04)
(2.2000,1.536189e-04)
(2.4000,5.537634e-05)
};
\addlegendentry{GS $N = 24$}
\addplot[dashed, thick, mark=square, mark options={solid}, orange] coordinates {
(0.0000,7.893750e-02)
(0.2000,5.334375e-02)
(0.4000,3.528125e-02)
(0.6000,2.231522e-02)
(0.8000,1.332258e-02)
(1.0000,7.564935e-03)
(1.2000,4.328125e-03)
(1.4000,2.169355e-03)
(1.6000,1.027523e-03)
(1.8000,4.790698e-04)
(2.0000,2.186147e-04)
(2.2000,9.674330e-05)
};
\addlegendentry{GS $N = 32$}
\addplot[dotted, thick, mark=o, mark options={solid}, black] coordinates {
(0.0000,6.431250e-02)
(0.2000,4.246875e-02)
(0.4000,2.756250e-02)
(0.6000,1.722826e-02)
(0.8000,1.002419e-02)
(1.0000,5.603896e-03)
(1.2000,3.062500e-03)
(1.4000,1.516129e-03)
(1.6000,7.110092e-04)
(1.8000,3.372093e-04)
(2.0000,1.471861e-04)
(2.2000,6.896552e-05)
};
\addlegendentry{ML lower bound}

	\end{semilogyaxis}
	\end{tikzpicture}
       }
    \hfill
  \subfloat[$\mathcal{R}\left(3, 9\right)$\label{1e}]{%
        \begin{tikzpicture}[ scale = 0.68 ]
	\begin{semilogyaxis}[xlabel={$E_b/N_0 \left(\textrm{dB}\right)$}, ylabel=Block error rate, xmin=-0.2, xmax=2, ymax=1e-1, ymin=1e-4, grid=both, yminorgrids=true,legend pos=south west, legend style={font=\footnotesize}, legend cell align=left, minor tick num = 4]	
\addplot[solid, thick, mark=x, mark options={solid}, red] coordinates {
(0.6000,1.745801e-01)
(0.8000,1.068340e-01)
(1.0000,6.147564e-02)
(1.2000,3.281510e-02)
(1.4000,1.530407e-02)
(1.6000,6.350327e-03)
(1.8000,2.399685e-03)
(2.0000,8.181294e-04)
(2.2000,2.634879e-04)
(2.4000,9.121832e-05)
};
\addlegendentry{RPL $L = 1024$}
\addplot[solid, thick, mark=star, mark options={solid}, violet] coordinates {
(0.0000,3.176190e-01)
(0.2000,2.096429e-01)
(0.4000,1.339474e-01)
(0.6000,8.063492e-02)
(0.8000,4.200000e-02)
(1.0000,1.838235e-02)
(1.2000,8.620690e-03)
(1.4000,3.215434e-03)
(1.6000,9.372071e-04)
(1.8000,3.150000e-04)
(2.0000,9.310987e-05)
};
\addlegendentry{RPA}
\addplot[solid, thick, mark=+, mark options={solid}, blue] coordinates {
(0.0000,2.470455e-01)
(0.2000,1.611765e-01)
(0.4000,9.989286e-02)
(0.6000,5.796667e-02)
(0.8000,2.980645e-02)
(1.0000,1.394565e-02)
(1.2000,6.047619e-03)
(1.4000,2.354839e-03)
(1.6000,7.302158e-04)
(1.8000,2.518703e-04)
(2.0000,5.238345e-05)
};
\addlegendentry{GS $N = 150$}
\addplot[dashed, thick, mark=square, mark options={solid}, orange] coordinates {
(0.0000,9.691667e-02)
(0.2000,5.103846e-02)
(0.4000,2.492683e-02)
(0.6000,1.164130e-02)
(0.8000,4.740741e-03)
(1.0000,1.685484e-03)
(1.2000,5.516304e-04)
(1.4000,1.658375e-04)
(1.6000,4.845155e-05)
};
\addlegendentry{GS $N = 1024$}
\addplot[dotted, thick, mark=o, mark options={solid}, black] coordinates {
(0.0000,6.491667e-02)
(0.2000,3.146154e-02)
(0.4000,1.560976e-02)
(0.6000,6.510870e-03)
(0.8000,2.564815e-03)
(1.0000,9.112903e-04)
(1.2000,3.097826e-04)
(1.4000,9.452736e-05)
(1.6000,3.296703e-05)
};
\addlegendentry{ML lower bound}

	\end{semilogyaxis}
	\end{tikzpicture}
        }
   \hfill
  \subfloat[$\mathcal{R}\left(4, 9\right)$\label{1f}]{%
        \begin{tikzpicture}[ scale = 0.68 ]
	\begin{semilogyaxis}[xlabel={$E_b/N_0 \left(\textrm{dB}\right)$}, ylabel=Block error rate, xmin=0, xmax=3, ymax=1e-1, ymin=1e-4, grid=both, yminorgrids=true,legend pos=south west, legend style={font=\footnotesize}, legend cell align=left, minor tick num = 4]	
\addplot[solid, thick, mark=x, mark options={solid}, red] coordinates {
(1.4000,2.468294e-01)
(1.6000,1.410980e-01)
(1.8000,7.415789e-02)
(2.0000,3.370539e-02)
(2.2000,1.314517e-02)
(2.4000,4.335949e-03)
(2.6000,1.231587e-03)
(2.8000,2.906894e-04)
(3.0000,5.786213e-05)
};
\addlegendentry{RPL $L = 1024$}
\addplot[solid, thick, mark=+, mark options={solid}, blue] coordinates {
(1.4000,1.743750e-01)
(1.6000,9.222581e-02)
(1.8000,4.280645e-02)
(2.0000,1.725806e-02)
(2.2000,5.586022e-03)
(2.4000,1.669355e-03)
(2.6000,3.311258e-04)
(2.8000,7.727621e-05)
};
\addlegendentry{GS $N = 220$}
\addplot[dashed, thick, mark=square, mark options={solid}, orange] coordinates {
(0.6000,1.746667e-01)
(0.8000,8.525000e-02)
(1.0000,3.537931e-02)
(1.2000,1.305195e-02)
(1.4000,4.166667e-03)
(1.6000,1.046632e-03)
(1.8000,2.442002e-04)
(2.0000,5.055612e-05)
};
\addlegendentry{GS $N = 2^{15}$}
\addplot[dotted, thick, mark=o, mark options={solid}, black] coordinates {
(0.6000,1.161667e-01)
(0.8000,5.308333e-02)
(1.0000,1.975862e-02)
(1.2000,6.714286e-03)
(1.4000,1.987500e-03)
(1.6000,5.544041e-04)
(1.8000,1.318681e-04)
(2.0000,3.842265e-05)
};
\addlegendentry{ML lower bound}

	\end{semilogyaxis}
	\end{tikzpicture}
        }
   \\ 
   \subfloat[$\mathcal{R}\left(5, 9\right)$\label{1g}]{%
       \begin{tikzpicture}[ scale = 0.68 ]
	\begin{semilogyaxis}[xlabel={$E_b/N_0 \left(\textrm{dB}\right)$}, ylabel=Block error rate, xmin=1.8, xmax=3.8, ymax=1e-1, ymin=1e-4, grid=both, yminorgrids=true,legend pos=south west, legend style={font=\footnotesize}, legend cell align=left, minor tick num = 4]	
\addplot[solid, thick, mark=x, mark options={solid}, red] coordinates {
(2.4000,2.366215e-01)
(2.6000,1.271557e-01)
(2.8000,5.911897e-02)
(3.0000,2.407077e-02)
(3.2000,8.442082e-03)
(3.4000,2.263629e-03)
(3.6000,5.216283e-04)
(3.8000,9.522957e-05)
};
\addlegendentry{RPL $L = 1024$}
\addplot[solid, thick, mark=+, mark options={solid}, blue] coordinates {
(2.4000,2.157143e-01)
(2.6000,1.038750e-01)
(2.8000,4.384615e-02)
(3.0000,1.540541e-02)
(3.2000,4.675926e-03)
(3.4000,1.294872e-03)
(3.6000,2.061224e-04)
(3.8000,2.909270e-05)
};
\addlegendentry{GS $N = 250$}
\addplot[dashed, thick, mark=square, mark options={solid}, orange] coordinates {
(2.0000,1.441429e-01)
(2.2000,6.052941e-02)
(2.4000,2.173913e-02)
(2.6000,6.655629e-03)
(2.8000,1.812500e-03)
(3.0000,4.376368e-04)
(3.2000,1.094092e-04)
(3.4000,2.950000e-05)
};
\addlegendentry{GS $N = 2^{14}$}
\addplot[dotted, thick, mark=o, mark options={solid}, black] coordinates {
(2.0000,1.014286e-01)
(2.2000,3.976471e-02)
(2.4000,1.486957e-02)
(2.6000,4.390728e-03)
(2.8000,1.223214e-03)
(3.0000,3.741794e-04)
(3.2000,9.846827e-05)
(3.4000,2.700000e-05)
};
\addlegendentry{ML lower bound}

	\end{semilogyaxis}
	\end{tikzpicture}
       }
    \hfill
  \subfloat[$\mathcal{R}\left(2, 10\right)$\label{1h}]{%
        \begin{tikzpicture}[ scale = 0.68 ]
	\begin{semilogyaxis}[xlabel={$E_b/N_0 \left(\textrm{dB}\right)$}, ylabel=Block error rate, xmin=0.4, xmax=2.4, ymax=1e-1, ymin=1e-4, grid=both, yminorgrids=true,legend pos=north east, legend style={font=\footnotesize}, legend cell align=left, minor tick num = 4]	
\addplot[solid, thick, mark=x, mark options={solid}, red] coordinates {
(0.0000,1.390073e-01)
(0.2000,1.044534e-01)
(0.4000,7.306434e-02)
(0.6000,4.810341e-02)
(0.8000,3.052910e-02)
(1.0000,1.816320e-02)
(1.2000,1.036418e-02)
(1.4000,5.292646e-03)
(1.6000,2.461894e-03)
(1.8000,1.196570e-03)
(2.0000,5.320942e-04)
(2.2000,2.086281e-04)
(2.4000,8.001775e-05)
};
\addlegendentry{RPL $L = 1024$}
\addplot[solid, thick, mark=star, mark options={solid}, violet] coordinates {
(0.0000,8.062500e-02)
(0.2000,5.087302e-02)
(0.4000,3.048485e-02)
(0.6000,1.702532e-02)
(0.8000,9.738806e-03)
(1.0000,4.377510e-03)
(1.2000,2.161017e-03)
(1.4000,1.150342e-03)
(1.6000,4.011887e-04)
(1.8000,1.495513e-04)
(2.0000,5.424063e-05)
};
\addlegendentry{RPA}
\addplot[solid, thick, mark=+, mark options={solid}, blue] coordinates {
(-0.6000,1.657667e-01)
(-0.4000,1.185333e-01)
(-0.2000,8.013333e-02)
(0.0000,5.306250e-02)
(0.2000,3.165625e-02)
(0.4000,1.912903e-02)
(0.6000,1.102174e-02)
(0.8000,5.777174e-03)
(1.0000,2.750000e-03)
(1.2000,1.157609e-03)
(1.4000,5.350000e-04)
(1.6000,2.239130e-04)
(1.8000,8.978873e-05)
};
\addlegendentry{GS $N = 90$}
\addplot[dashed, thick, mark=square, mark options={solid}, orange] coordinates {
(-0.6000,1.507333e-01)
(-0.4000,1.047333e-01)
(-0.2000,6.886667e-02)
(0.0000,4.453333e-02)
(0.2000,2.730645e-02)
(0.4000,1.597468e-02)
(0.6000,8.894309e-03)
(0.8000,3.907407e-03)
(1.0000,1.852459e-03)
(1.2000,8.559671e-04)
(1.4000,3.537415e-04)
(1.6000,1.595577e-04)
(1.8000,5.194805e-05)
};
\addlegendentry{GS $N = 128$}
\addplot[dotted, thick, mark=o, mark options={solid}, black] coordinates {
(-0.6000,1.289000e-01)
(-0.4000,8.703333e-02)
(-0.2000,5.593333e-02)
(0.0000,3.536667e-02)
(0.2000,2.096774e-02)
(0.4000,1.232911e-02)
(0.6000,6.569106e-03)
(0.8000,2.666667e-03)
(1.0000,1.270492e-03)
(1.2000,5.843621e-04)
(1.4000,2.244898e-04)
(1.6000,1.042654e-04)
(1.8000,3.584416e-05)
};
\addlegendentry{ML lower bound}

	\end{semilogyaxis}
	\end{tikzpicture}
        }
   \hfill
  \subfloat[$\mathcal{R}\left(2, 11\right)$\label{1i}]{%
        \begin{tikzpicture}[ scale = 0.68 ]
	\begin{semilogyaxis}[xlabel={$E_b/N_0 \left(\textrm{dB}\right)$}, ylabel=Block error rate, xmin=-1, xmax=2.5, ymax=1e-1, ymin=1e-4, grid=both, yminorgrids=true,legend pos=south west, legend style={font=\footnotesize}, legend cell align=left, minor tick num = 4]	
\addplot[solid, thick, mark=x, mark options={solid}, red] coordinates {
(1.0000,1.105084e-01)
(1.2000,7.310026e-02)
(1.4000,4.557607e-02)
(1.6000,2.750766e-02)
(1.8000,1.595002e-02)
(2.0000,8.309958e-03)
(2.2000,3.683112e-03)
(2.4000,1.645924e-03)
(2.6000,8.110427e-04)
(2.8000,3.128688e-04)
(3.0000,8.439452e-05)
};
\addlegendentry{RPL $L = 1024$}
\addplot[solid, thick, mark=star, mark options={solid}, violet] coordinates {
(0.0000,7.803030e-02)
(0.2000,4.608333e-02)
(0.4000,2.483092e-02)
(0.6000,1.312020e-02)
(0.8000,6.485788e-03)
(1.0000,2.774725e-03)
(1.2000,1.253012e-03)
(1.4000,4.247876e-04)
(1.6000,1.296758e-04)
(1.8000,2.995507e-05)
};
\addlegendentry{RPA}
\addplot[solid, thick, mark=+, mark options={solid}, blue] coordinates {
(-0.6000,3.673333e-01)
(-0.4000,2.964333e-01)
(-0.2000,2.350000e-01)
(0.0000,1.747667e-01)
(0.2000,1.283667e-01)
(0.4000,9.246667e-02)
(0.6000,6.173333e-02)
(0.8000,3.793333e-02)
(1.0000,2.504839e-02)
(1.2000,1.440659e-02)
(1.4000,7.675325e-03)
(1.6000,4.080645e-03)
(1.8000,1.844262e-03)
(2.0000,8.688525e-04)
(2.2000,2.958580e-04)
(2.4000,1.507463e-04)
(2.6000,4.678945e-05)
};
\addlegendentry{GS $N = 80$}
\addplot[dashed, thick, mark=square, mark options={solid}, orange] coordinates {
(-0.6000,1.009091e-01)
(-0.4000,6.746667e-02)
(-0.2000,4.029630e-02)
(0.0000,2.275000e-02)
(0.2000,1.234940e-02)
(0.4000,6.000000e-03)
(0.6000,2.955882e-03)
(0.8000,1.142045e-03)
(1.0000,4.391304e-04)
(1.2000,1.499250e-04)
(1.4000,4.800000e-05)
};
\addlegendentry{GS $N = 1024$}
\addplot[dotted, thick, mark=o, mark options={solid}, black] coordinates {
(-0.6000,8.972727e-02)
(-0.4000,5.940000e-02)
(-0.2000,3.400000e-02)
(0.0000,1.947917e-02)
(0.2000,1.021687e-02)
(0.4000,4.898204e-03)
(0.6000,2.411765e-03)
(0.8000,9.204545e-04)
(1.0000,3.065217e-04)
(1.2000,9.595202e-05)
(1.4000,2.950000e-05)
};
\addlegendentry{ML lower bound}

	\end{semilogyaxis}
	\end{tikzpicture}
        }
	
\caption{The block error rate performance of RM codes on a BI-AWGN channel. For the recursive permutation list algorithm, we use $L$ to denote the list size. For the graph search algorithm, we use $N$ to denote the number of iterations.} 
\label{awgn}
\end{figure*}
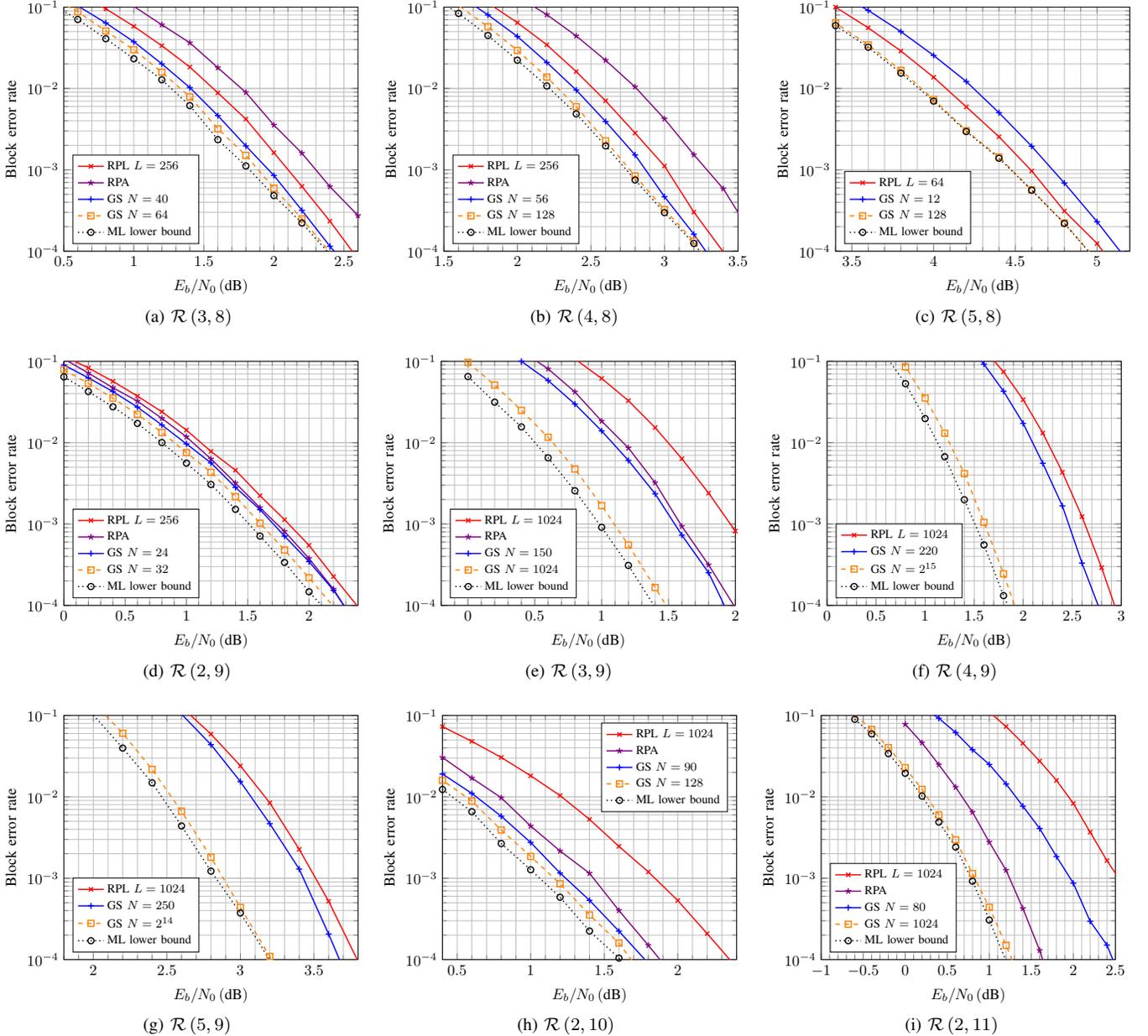

\begin{table*}[tb]
\caption{Comparison of average-case running time to decode one codeword between algorithms considered in Fig. \ref{awgn}. For the graph search and the RPA algorithms, the running time is estimated at $E_b/N_0$ required to reach a BLER of $10^{-4}$. The worst-case running time of the graph search algorithm is similar to the running time of the recursive permutation list algorithm.}
\begin{center}
\begin{tabular}{|c|c|c|c|c|c|c|c|c|c|}
\hline
Code & $\mathcal{R}\left(3,8\right)$ & $\mathcal{R}\left(4,8\right)$ & $\mathcal{R}\left(5,8\right)$ & $\mathcal{R}\left(2,9\right)$ & $\mathcal{R}\left(3,9\right)$ & $\mathcal{R}\left(4,9\right)$ & $\mathcal{R}\left(5,9\right)$ & $\mathcal{R}\left(2,10\right)$ & $\mathcal{R}\left(2,11\right)$ \\
\hline
RPL & 3.9ms & 5.5ms & 1.5ms & 4ms & 29ms & 46ms & 53ms & 35ms & 66ms\\
\hline
RPA & 196ms & 9.1s & -- & 14ms & 2.1s & -- & -- & 58ms & 227ms \\
\hline
GS & 2.4ms & 1ms & 0.2ms & 3.1ms & 10ms & 4.8ms & 4ms & 24ms & 45ms\\
\hline
\end{tabular}
\label{runningTime}
\end{center}
\end{table*}

We can see that, in most cases, the graph search decoder outperforms the recursive permutation list decoder with a similar running time by approximately 0.1 dB.
The gap increases to 0.4 dB and 0.6 dB for $\mathcal{R}\left(3, 9\right)$ and the second-order RM codes of length larger than 512, respectively.
However, the recursive permutation list algorithm performs better than the graph search algorithm for a high-rate $\mathcal{R}\left(5, 8\right)$.
Observe that for $\mathcal{R}\left(5, 8\right)$ we consider the recursive permutation list algorithm with a relatively small list of size 64, while for the rest cases the list size is at least 256.
Therefore, from Fig. \ref{awgn}, we notice that the graph search decoder is competitive with the recursive permutation list decoder with a list of size greater than 128.

RPA decoding outperforms graph search decoding considered in the first scenario only for $\mathcal{R}\left(2, 11\right)$.
However, it is done at the cost of a much longer running time.
In Table \ref{runningTime}, we present the running time to decode one codeword for the recursive permutation list, the RPA, and the graph search algorithms considered in Fig. \ref{awgn}.
Note that, for graph search decoding, results are presented for the first scenario.
Since the running time of the graph search and the RPA algorithms depends on the noise level, we only present the running time at a BLER of $10^{-4}$ for these two cases.
Recall that the worst-case running time of graph search decoding coincides with the running time of the recursive permutation list algorithm.
From Table \ref{runningTime}, we can see that the running time of RPA decoding is much longer compared to the proposed algorithm.
For $\mathcal{R}\left(2, 11\right)$, the graph search decoder with 1024 iterations outperforms RPA decoding by approximately 0.4 dB at a BLER of $10^{-4}$, while the average-case running time of these two algorithms is the same. 
Thus, from Fig. \ref{awgn}, we can conclude that the proposed algorithm outperforms the RPA decoder with a similar average-case running time.

In \cite{MinYe}, a simplified version of RPA decoding is proposed for high-rate RM codes.
For instance, a list version of simplified RPA decoding is demonstrated to achieve the ML decoding performance for $\mathcal{R}\left(5, 8\right)$. 
It takes approximately 13 ms at a BLER of $10^{-4}$.
The graph search decoder with 128 iterations also achieves the ML decoding performance, but it takes only 3 ms on average at a BLER of $10^{-4}$.
Another simplified version of RPA decoding, called sparse multi-decoder RPA (SRPA), is proposed in \cite{SRPA}.
The SRPA algorithm allows decreasing the running time of RPA decoding up to four times for $\mathcal{R}\left(2, 9\right)$ at the cost of 0.15 dB loss and up to eight times for $\mathcal{R}\left(3, 8\right)$ without any performance loss.
Thus, the running time of the SRPA algorithm and the graph search algorithm for $\mathcal{R}\left(2, 9\right)$ is similar.
However, due to 0.15 dB performance loss, the graph search decoding demonstrates slightly better performance.
In the case of $\mathcal{R}\left(3, 8\right)$, the running time of SRPA decoding is much longer compared to the proposed algorithm.
Hence, simplified versions of RPA decoding are also not competitive with the graph search algorithm.

From Fig. \ref{awgn}, we can see that the graph search decoder requires a large number of iterations to perform 0.1 dB from ML for RM codes of length 512 and order greater than 3.
For instance, the required number of iterations equals $2^{15}$ for $\mathcal{R}\left(4, 9\right)$ resulting in a very long running time in the worst case.
However, the average-case running time of the proposed algorithm is much smaller.
In Figs. \ref{runningTime39} and \ref{runningTime49} we compare the average-case running time as a function of $E_b/N_0$ for $\mathcal{R}\left(3, 9\right)$ and $\mathcal{R}\left(4, 9\right)$, respectively.
Note that the graph search decoder approaching ML performance takes less time to decode than the recursive permutation list algorithm in high signal-to-noise region.
The average-case running time of the graph search decoder can be further reduced using a CRC.
Recall that the proposed algorithm is terminated if a codeword with the best metric found so far satisfies the CRC.
In our simulations, we consider the 24-bits CRC with the generator polynomial 
\begin{equation*}
x^{24} + x^{23} + x^{21} + x^{20} + x^{17} + x^{15} + x^{13} + x^{12}+ x^8 + x^4 + x^2 + x + 1.
\end{equation*}
Such a long CRC guarantees that undetected errors, i.e. the codeword that satisfies the CRC is not a valid one, do not affect the performance for a BLER above $10^{-4}$.
Hence, the results presented in Figs. \ref{runningTime39} and \ref{runningTime49} can be regarded as the best possible improvement of the average-case running time if we have a genie that stops decoding immediately after a valid codeword is found. 
Although this technique allows reducing the average-case running time, the 24-bits CRC  results in a huge rate loss for the considered cases.
Thus, a further improvement of the proposed algorithm is required to decrease the average-case running time using a reasonable length CRC with negligible performance loss.

\begin{figure}[t]
\centering
\begin{tikzpicture}[ scale = 0.9 ]
	\begin{semilogyaxis}[xlabel={$E_b/N_0 \left(\textrm{dB}\right)$}, ylabel=Running time (s), xmin=0, xmax=2, ymax=6, ymin=6e-4, grid=both, yminorgrids=true,legend pos=north west, legend style={font=\footnotesize}, legend cell align=left, minor tick num = 4]	
\addplot[solid, thick, mark=x, mark options={solid}, red] coordinates {
(0.600,3.610918e-02)
(0.800,3.603699e-02)
(1.000,3.628957e-02)
(1.200,3.648063e-02)
(1.400,3.657724e-02)
(1.600,3.667639e-02)
(1.800,3.672617e-02)
(2.000,3.715150e-02)
(2.200,3.732920e-02)
(2.400,3.757632e-02)
};
\addlegendentry{RPL $L = 1024$}
\addplot[solid, thick, mark=star, mark options={solid}, violet] coordinates {
(0.000,5.163051e+00)
(0.200,4.731213e+00)
(0.400,4.400298e+00)
(0.600,4.003377e+00)
(0.800,3.226581e+00)
(1.000,2.903756e+00)
(1.200,2.703663e+00)
(1.400,2.584979e+00)
(1.600,2.657296e+00)
(1.800,2.452090e+00)
(2.000,2.396307e+00)
};
\addlegendentry{RPA}
\addplot[solid, thick, mark=+, mark options={solid}, blue] coordinates {
(0.000,2.160047e-02)
(0.200,2.131989e-02)
(0.400,2.171089e-02)
(0.600,2.133375e-02)
(0.800,2.037369e-02)
(1.000,1.908948e-02)
(1.200,1.747599e-02)
(1.400,1.558244e-02)
(1.600,1.360934e-02)
(1.800,1.147791e-02)
(2.000,9.660620e-03)
};
\addlegendentry{GS $N = 150$}
\addplot[solid, thick, mark=square, mark options={solid}, orange] coordinates {
(0.000,8.822037e-02)
(0.200,7.671514e-02)
(0.400,6.211095e-02)
(0.600,5.086709e-02)
(0.800,3.990388e-02)
(1.000,3.092275e-02)
(1.200,2.411974e-02)
(1.400,1.876108e-02)
(1.600,1.479175e-02)
};
\addlegendentry{GS $N = 1024$}
\addplot[dashed, thick, mark=+, mark options={solid}, blue] coordinates {
(0.000,8.239431e-03)
(0.200,6.517048e-03)
(0.400,5.070499e-03)
(0.600,3.831555e-03)
(0.800,2.840967e-03)
(1.000,2.058115e-03)
(1.200,1.525988e-03)
(1.400,1.117978e-03)
(1.600,8.349128e-04)
(1.800,6.240183e-04)
(2.000,4.731738e-04)
};
\addlegendentry{GS $N = 150$ (CRC)}
\addplot[dashed, thick, mark=square, mark options={solid}, orange] coordinates {
(0.000,2.298654e-02)
(0.200,1.506034e-02)
(0.400,9.485985e-03)
(0.600,6.015889e-03)
(0.800,3.874501e-03)
(1.000,2.467009e-03)
(1.200,1.684546e-03)
(1.400,1.163455e-03)
(1.600,8.446966e-04)
};
\addlegendentry{GS $N = 1024$ (CRC)}

	\end{semilogyaxis}
	\end{tikzpicture}
\caption{The average-case running time of $\mathcal{R}\left(3, 9\right)$ decoding on a BI-AWGN channel.}
\label{runningTime39}
\end{figure}
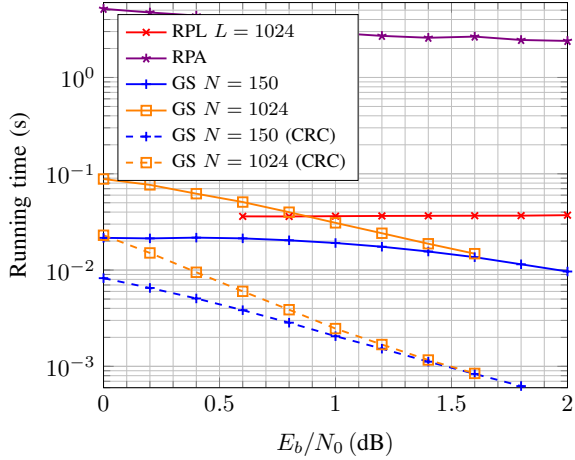

\begin{figure}[t]
\centering
\begin{tikzpicture}[ scale = 0.9 ]
	\begin{semilogyaxis}[xlabel={$E_b/N_0 \left(\textrm{dB}\right)$}, ylabel=Running time (s), xmin=0.6, xmax=3, ymax=3, ymin=6e-4, grid=both, yminorgrids=true,legend pos=north east, legend style={font=\footnotesize}, legend cell align=left, minor tick num = 4]	
\addplot[solid, thick, mark=x, mark options={solid}, red] coordinates {
(1.400,4.983069e-02)
(1.600,5.067006e-02)
(1.800,5.098758e-02)
(2.000,5.166807e-02)
(2.200,5.229967e-02)
(2.400,5.319768e-02)
(2.600,5.384504e-02)
(2.800,4.883615e-02)
(3.000,4.959127e-02)
};
\addlegendentry{RPL $L = 1024$}
\addplot[solid, thick, mark=+, mark options={solid}, blue] coordinates {
(1.400,2.638013e-02)
(1.600,2.251280e-02)
(1.800,1.836596e-02)
(2.000,1.448362e-02)
(2.200,1.094163e-02)
(2.400,8.183105e-03)
(2.600,6.180101e-03)
(2.800,4.617713e-03)
};
\addlegendentry{GS $N = 220$}
\addplot[solid, thick, mark=square, mark options={solid}, orange] coordinates {
(0.600,2.133472e+00)
(0.800,1.446850e+00)
(1.000,8.695518e-01)
(1.200,4.397261e-01)
(1.400,2.220544e-01)
(1.600,9.365897e-02)
(1.800,4.016416e-02)
(2.000,2.087183e-02)
};
\addlegendentry{GS $N = 2^{15}$}
\addplot[dashed, thick, mark=+, mark options={solid}, blue] coordinates {
(1.400,1.194665e-02)
(1.600,8.519181e-03)
(1.800,5.819517e-03)
(2.000,3.899534e-03)
(2.200,2.562395e-03)
(2.400,1.693636e-03)
(2.600,1.066481e-03)
(2.800,6.852950e-04)
};
\addlegendentry{GS $N = 220$ (CRC)}
\addplot[dashed, thick, mark=square, mark options={solid}, orange] coordinates {
(0.600,9.940621e-01)
(0.800,5.730542e-01)
(1.000,2.903275e-01)
(1.200,1.300597e-01)
(1.400,5.363423e-02)
(1.600,2.127696e-02)
(1.800,9.705992e-03)
(2.000,4.875927e-03)
};
\addlegendentry{GS $N = 2^{15}$ (CRC)}

	\end{semilogyaxis}
	\end{tikzpicture}
\caption{The average-case running time of $\mathcal{R}\left(4, 9\right)$ decoding on a BI-AWGN channel.}
\label{runningTime49}
\end{figure}
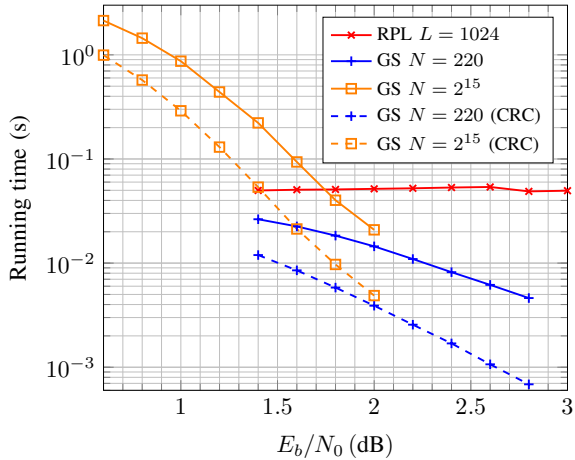

In Fig. \ref{bsc}, we plot the block error probability for a binary symmetric channel (BSC).
Note that we consider a simplified version of RPA decoding proposed for a BSC (see Algorithm 1 in \cite{MinYe}), while for graph search decoding and recursive permutation list decoding we use hard-input versions of the algorithms considered for a BI-AWGN channel.
As in Fig. \ref{awgn}, we use solid blue curves to report results of the graph search algorithm with the worst-case running time that is similar to that of the recursive permutation list algorithm and we use dashed orange curves to report results of the graph search algorithm approaching the ML performance.
In contrast to a BI-AWGN channel, the graph search decoder requires a moderate number of iterations ($N \leq 1024$) to approach the performance of ML decoding for a BSC.
For RM codes of length greater than 256, the proposed algorithm outperforms the recursive permutation list algorithm with the same worst-case running time, while these decoders perform similarly for RM codes of length 256.
Furthermore, the graph search algorithm outperforms RPA decoding and, at the same time, takes a shorter average-case running time.
For instance, the average-case running time of RPA decoding for $\mathcal{R}\left(3, 8\right)$ at a BLER of $10^{-4}$ is 105 ms, while the proposed algorithm with 16 iterations takes only 0.9 ms.

\begin{figure*}[tb]

\centering
  \subfloat[$\mathcal{R}\left(3, 8\right)$\label{2a}]{%
       \begin{tikzpicture}[ scale = 0.68 ]
	\begin{semilogyaxis}[xlabel=Crossover probability, ylabel=Block error rate, xmin=0.07, xmax=0.13, ymax=1e-1, ymin=1e-4, grid=both, yminorgrids=true,legend pos=south east, legend style={font=\footnotesize}, legend cell align=left, xticklabels={0.07,0.08,0.09,0.10,0.11,0.12}, xtick={0.07, 0.08,0.09,0.10,0.11,0.12}]	
\addplot[solid, thick, mark=x, mark options={solid}, red] coordinates {
(0.1500,4.211362e-01)
(0.1450,3.438278e-01)
(0.1400,2.708658e-01)
(0.1350,2.074050e-01)
(0.1300,1.525472e-01)
(0.1250,1.067594e-01)
(0.1200,7.138840e-02)
(0.1150,4.537379e-02)
(0.1100,2.736804e-02)
(0.1050,1.551301e-02)
(0.1000,8.432729e-03)
(0.0950,4.087032e-03)
(0.0900,1.808297e-03)
(0.0850,8.680162e-04)
(0.0800,3.386552e-04)
(0.0750,1.219809e-04)
(0.0700,3.921988e-05)
};
\addlegendentry{RPL $L = 64$}
\addplot[solid, thick, mark=star, mark options={solid}, violet] coordinates {
(0.1500,5.400000e-01)
(0.1450,4.455556e-01)
(0.1400,3.632143e-01)
(0.1350,2.837500e-01)
(0.1300,2.153448e-01)
(0.1250,1.537879e-01)
(0.1200,1.045833e-01)
(0.1150,6.503226e-02)
(0.1100,3.790614e-02)
(0.1050,2.115304e-02)
(0.1000,1.079827e-02)
(0.0950,5.106383e-03)
(0.0900,2.263514e-03)
(0.0850,9.082652e-04)
(0.0800,3.376097e-04)
(0.0750,1.367054e-04)
(0.0700,4.497302e-05)
};
\addlegendentry{RPA}
\addplot[solid, thick, mark=+, mark options={solid}, blue] coordinates {
(0.1500,4.497550e-01)
(0.1450,3.703372e-01)
(0.1400,2.966921e-01)
(0.1350,2.290297e-01)
(0.1300,1.698107e-01)
(0.1250,1.208316e-01)
(0.1200,8.149128e-02)
(0.1150,5.222222e-02)
(0.1100,3.213782e-02)
(0.1050,1.845513e-02)
(0.1000,1.006028e-02)
(0.0950,5.117438e-03)
(0.0900,2.416667e-03)
(0.0850,1.078571e-03)
(0.0800,4.898649e-04)
(0.0750,1.837748e-04)
(0.0700,4.997526e-05)
};
\addlegendentry{GS $N = 7$}
\addplot[dashed, thick, mark=square, mark options={solid}, orange] coordinates {
(0.1400,2.122937e-01)
(0.1350,1.545645e-01)
(0.1300,1.085000e-01)
(0.1250,7.270690e-02)
(0.1200,4.696875e-02)
(0.1150,2.898958e-02)
(0.1100,1.672917e-02)
(0.1050,8.928000e-03)
(0.1000,4.552083e-03)
(0.0950,2.229167e-03)
(0.0900,9.444444e-04)
(0.0850,4.693141e-04)
(0.0800,2.092257e-04)
(0.0750,7.286096e-05)
};
\addlegendentry{GS $N = 16$}
\addplot[dotted, thick, mark=o, mark options={solid}, black] coordinates {
(0.1400,1.786825e-01)
(0.1350,1.287661e-01)
(0.1300,8.957258e-02)
(0.1250,5.962069e-02)
(0.1200,3.807292e-02)
(0.1150,2.355208e-02)
(0.1100,1.366667e-02)
(0.1050,7.524000e-03)
(0.1000,3.885417e-03)
(0.0950,1.968750e-03)
(0.0900,8.452381e-04)
(0.0850,4.151625e-04)
(0.0800,1.960461e-04)
(0.0750,7.085561e-05)
};
\addlegendentry{ML lower bound}

	\end{semilogyaxis}
	\end{tikzpicture}
       }
    \hfill
  \subfloat[$\mathcal{R}\left(4, 8\right)$\label{2b}]{%
        \begin{tikzpicture}[ scale = 0.68 ]
	\begin{semilogyaxis}[xlabel=Crossover probability, ylabel=Block error rate,, ylabel=Block error rate, xmin=0.015, xmax=0.04, ymax=1e-1, ymin=1e-4, grid=both, yminorgrids=true,legend pos=south east, legend style={font=\footnotesize}, legend cell align=left,xticklabels={0.02,0.03,0.04}, xtick={0.02,0.03,0.04}, scaled x ticks=false]
\addplot[solid, thick, mark=x, mark options={solid}, red] coordinates {
(0.0800,8.262505e-01)
(0.0750,7.503356e-01)
(0.0700,6.564215e-01)
(0.0650,5.479300e-01)
(0.0600,4.305801e-01)
(0.0550,3.153004e-01)
(0.0500,2.107870e-01)
(0.0450,1.266309e-01)
(0.0400,6.645244e-02)
(0.0350,2.933199e-02)
(0.0300,1.074044e-02)
(0.0250,2.905259e-03)
(0.0200,4.991459e-04)
(0.0150,5.718334e-05)
};
\addlegendentry{RPL $L = 32$}
\addplot[solid, thick, mark=star, mark options={solid}, violet] coordinates {
(0.0600,6.300000e-01)
(0.0550,4.820833e-01)
(0.0500,3.413333e-01)
(0.0450,2.110417e-01)
(0.0400,1.120879e-01)
(0.0350,5.102041e-02)
(0.0300,1.865922e-02)
(0.0250,4.625347e-03)
(0.0200,8.116883e-04)
(0.0150,9.455200e-05)
};
\addlegendentry{RPA}
\addplot[solid, thick, mark=+, mark options={solid}, blue] coordinates {
(0.0600,4.151410e-01)
(0.0550,2.978539e-01)
(0.0500,1.945246e-01)
(0.0450,1.137793e-01)
(0.0400,5.746543e-02)
(0.0350,2.470927e-02)
(0.0300,8.254545e-03)
(0.0250,2.130199e-03)
(0.0200,4.077025e-04)
(0.0150,4.412448e-05)
};
\addlegendentry{GS $N = 5$}
\addplot[dashed, thick, mark=square, mark options={solid}, orange] coordinates {
(0.0500,1.552578e-01)
(0.0450,8.907595e-02)
(0.0400,4.428191e-02)
(0.0350,1.919457e-02)
(0.0300,6.665493e-03)
(0.0250,1.777506e-03)
(0.0200,3.643927e-04)
(0.0150,4.008667e-05)
};
\addlegendentry{GS $N = 16$}
\addplot[dotted, thick, mark=o, mark options={solid}, black] coordinates {
(0.0500,1.504141e-01)
(0.0450,8.671519e-02)
(0.0400,4.326064e-02)
(0.0350,1.891403e-02)
(0.0300,6.602113e-03)
(0.0250,1.762836e-03)
(0.0200,3.643927e-04)
(0.0150,4.008667e-05)
};
\addlegendentry{ML lower bound}

	\end{semilogyaxis}
	\end{tikzpicture}
        }
   \hfill
  \subfloat[$\mathcal{R}\left(3, 9\right)$\label{2c}]{%
        \begin{tikzpicture}[ scale = 0.68 ]
	\begin{semilogyaxis}[xlabel=Crossover probability, ylabel=Block error rate,, ylabel=Block error rate, xmin=0.13, xmax=0.2, ymax=1e-1, ymin=1e-4, grid=both, yminorgrids=true,legend pos=south east, legend style={font=\footnotesize}, legend cell align=left, , xticklabels={0.13,0.14,0.15,0.16,0.17,0.18,0.19}, xtick={0.13,0.14,0.15,0.16,0.17,0.18,0.19}]
\addplot[solid, thick, mark=x, mark options={solid}, red] coordinates {
(0.2500,9.918170e-01)
(0.2450,9.847515e-01)
(0.2400,9.714451e-01)
(0.2350,9.535287e-01)
(0.2300,9.239332e-01)
(0.2250,8.860057e-01)
(0.2200,8.328110e-01)
(0.2150,7.679379e-01)
(0.2100,6.886792e-01)
(0.2050,6.009810e-01)
(0.2000,5.093615e-01)
(0.1950,4.074614e-01)
(0.1900,3.160530e-01)
(0.1850,2.324609e-01)
(0.1800,1.632902e-01)
(0.1750,1.072890e-01)
(0.1700,6.538948e-02)
(0.1650,3.678182e-02)
(0.1600,1.937897e-02)
(0.1550,9.387172e-03)
(0.1500,4.740222e-03)
(0.1450,1.923005e-03)
(0.1400,6.597399e-04)
(0.1350,2.484836e-04)
(0.1300,7.379939e-05)
};
\addlegendentry{RPL $L = 1024$}
\addplot[solid, thick, mark=star, mark options={solid}, violet] coordinates {
(0.2000,9.028571e-01)
(0.1950,8.371429e-01)
(0.1900,7.678571e-01)
(0.1850,6.568750e-01)
(0.1800,5.411538e-01)
(0.1750,4.196875e-01)
(0.1700,3.051351e-01)
(0.1650,2.054000e-01)
(0.1600,1.280247e-01)
(0.1550,7.304965e-02)
(0.1500,3.944882e-02)
(0.1450,1.953125e-02)
(0.1400,8.298755e-03)
(0.1350,3.105590e-03)
(0.1300,1.062699e-03)
(0.1250,3.539197e-04)
(0.1200,8.796481e-05)
};
\addlegendentry{RPA}
\addplot[solid, thick, mark=+, mark options={solid}, blue] coordinates {
(0.1900,1.294677e-01)
(0.1850,7.948387e-02)
(0.1800,4.619355e-02)
(0.1750,2.431667e-02)
(0.1700,1.269149e-02)
(0.1650,6.135135e-03)
(0.1600,2.659574e-03)
(0.1550,1.134409e-03)
(0.1500,3.559322e-04)
(0.1450,1.645161e-04)
(0.1400,3.716551e-05)
};
\addlegendentry{GS $N = 150$}
\addplot[dashed, thick, mark=square, mark options={solid}, orange] coordinates {
(0.2000,2.213333e-01)
(0.1950,1.517500e-01)
(0.1900,9.664706e-02)
(0.1850,5.626667e-02)
(0.1800,3.063333e-02)
(0.1750,1.645161e-02)
(0.1700,8.266129e-03)
(0.1650,4.000000e-03)
(0.1600,1.549020e-03)
(0.1550,6.590164e-04)
(0.1500,2.321429e-04)
(0.1450,1.069519e-04)
(0.1400,2.790234e-05)
};
\addlegendentry{GS $N = 512$}
\addplot[dotted, thick, mark=o, mark options={solid}, black] coordinates {
(0.2000,2.045000e-01)
(0.1950,1.385000e-01)
(0.1900,8.676471e-02)
(0.1850,4.953333e-02)
(0.1800,2.691667e-02)
(0.1750,1.464516e-02)
(0.1700,7.209677e-03)
(0.1650,3.548387e-03)
(0.1600,1.398693e-03)
(0.1550,6.032787e-04)
(0.1500,1.919643e-04)
(0.1450,9.732620e-05)
(0.1400,2.640757e-05)
};
\addlegendentry{ML lower bound}

	\end{semilogyaxis}
	\end{tikzpicture}
        }
    \\
  \subfloat[$\mathcal{R}\left(4, 9\right)$\label{2d}]{%
       \begin{tikzpicture}[ scale = 0.68 ]
	\begin{semilogyaxis}[xlabel=Crossover probability, ylabel=Block error rate,, ylabel=Block error rate, xmin=0.045, xmax=0.1, ymax=1e-1, ymin=1e-4, grid=both, yminorgrids=true,legend pos=south east, legend style={font=\footnotesize}, legend cell align=left,xticklabels={0.05,0.06,0.07,0.08, 0.09}, xtick={0.05,0.06,0.07,0.08, 0.09}, scaled x ticks=false]
\addplot[solid, thick, mark=x, mark options={solid}, red] coordinates {
(0.1000,5.818286e-01)
(0.0950,4.485820e-01)
(0.0900,3.222133e-01)
(0.0850,2.157395e-01)
(0.0800,1.208426e-01)
(0.0750,5.843747e-02)
(0.0700,2.678611e-02)
(0.0650,9.189479e-03)
(0.0600,2.613450e-03)
(0.0550,6.700990e-04)
(0.0500,1.205538e-04)
(0.0450,1.894615e-05)
};
\addlegendentry{RPL $L = 1024$}
\addplot[solid, thick, mark=+, mark options={solid}, blue] coordinates {
(0.1300,9.230000e-01)
(0.1250,8.642500e-01)
(0.1200,7.870000e-01)
(0.1150,6.850000e-01)
(0.1100,5.495000e-01)
(0.1050,4.107500e-01)
(0.1000,2.795000e-01)
(0.0950,1.662500e-01)
(0.0900,9.278125e-02)
(0.0850,4.564062e-02)
(0.0800,1.969863e-02)
(0.0750,7.225989e-03)
(0.0700,2.248120e-03)
(0.0650,7.011494e-04)
(0.0600,2.054545e-04)
(0.0550,4.759615e-05)
};
\addlegendentry{GS $N = 220$}
\addplot[dashed, thick, mark=square, mark options={solid}, orange] coordinates {
(0.1000,2.520000e-01)
(0.0950,1.451000e-01)
(0.0900,7.775000e-02)
(0.0850,3.887500e-02)
(0.0800,1.618462e-02)
(0.0750,6.034286e-03)
(0.0700,1.975610e-03)
(0.0650,6.410256e-04)
(0.0600,1.959459e-04)
(0.0550,4.586241e-05)
};
\addlegendentry{GS $N = 512$}
\addplot[dotted, thick, mark=o, mark options={solid}, black] coordinates {
(0.1000,2.250000e-01)
(0.0950,1.283500e-01)
(0.0900,6.882143e-02)
(0.0850,3.379687e-02)
(0.0800,1.453846e-02)
(0.0750,5.388571e-03)
(0.0700,1.821138e-03)
(0.0650,6.217949e-04)
(0.0600,1.942568e-04)
(0.0550,4.436690e-05)
};
\addlegendentry{ML lower bound}

	\end{semilogyaxis}
	\end{tikzpicture}
       }
    \hfill
  \subfloat[$\mathcal{R}\left(2, 10\right)$\label{2e}]{%
        \begin{tikzpicture}[ scale = 0.68 ]
	\begin{semilogyaxis}[xlabel=Crossover probability, ylabel=Block error rate,, ylabel=Block error rate, xmin=0.29, xmax=0.34, ymax=1e-1, ymin=1e-4, grid=both, yminorgrids=true,legend pos=north west, legend style={font=\footnotesize}, legend cell align=left,xticklabels={0.29, 0.3, 0.31, 0.32, 0.33, 0.34}, xtick={0.29, 0.3, 0.31, 0.32, 0.33, 0.34}, scaled x ticks=false]

\addplot[solid, thick, mark=x, mark options={solid}, red] coordinates {
(0.3500,2.991347e-01)
(0.3450,2.068175e-01)
(0.3400,1.384534e-01)
(0.3350,8.616268e-02)
(0.3300,4.818910e-02)
(0.3250,2.484749e-02)
(0.3200,1.247963e-02)
(0.3150,5.347622e-03)
(0.3100,2.388684e-03)
(0.3050,9.290202e-04)
(0.3000,3.683431e-04)
(0.2950,1.209146e-04)
(0.2900,2.948564e-05)
};
\addlegendentry{RPL $L = 1024$}
\addplot[solid, thick, mark=star, mark options={solid}, violet] coordinates {
(0.3600,7.542667e-01)
(0.3550,6.439394e-01)
(0.3500,5.265152e-01)
(0.3450,3.934848e-01)
(0.3400,2.715152e-01)
(0.3350,1.720588e-01)
(0.3300,9.900000e-02)
(0.3250,5.210046e-02)
(0.3200,2.500000e-02)
(0.3150,1.106987e-02)
(0.3100,4.404396e-03)
(0.3050,1.520772e-03)
(0.3000,4.807462e-04)
(0.2950,1.342459e-04)
(0.2900,3.288490e-05)
};
\addlegendentry{RPA}
\addplot[solid, thick, mark=+, mark options={solid}, blue] coordinates {
(0.3500,1.604167e-01)
(0.3450,9.580645e-02)
(0.3400,5.291935e-02)
(0.3350,2.580645e-02)
(0.3300,1.255319e-02)
(0.3250,5.467742e-03)
(0.3200,2.082474e-03)
(0.3150,7.142857e-04)
(0.3100,2.304147e-04)
(0.3050,7.535795e-05)
};
\addlegendentry{GS $N = 90$}
\addplot[dashed, thick, mark=square, mark options={solid}, orange] coordinates {
(0.3600,3.380000e-01)
(0.3550,2.205000e-01)
(0.3500,1.376316e-01)
(0.3450,7.705000e-02)
(0.3400,4.203226e-02)
(0.3350,1.998276e-02)
(0.3300,9.198276e-03)
(0.3250,3.383333e-03)
(0.3200,1.324675e-03)
(0.3150,4.486607e-04)
(0.3100,1.531394e-04)
(0.3050,4.293560e-05)
};
\addlegendentry{GS $N = 256$}
\addplot[dotted, thick, mark=o, mark options={solid}, black] coordinates {
(0.3600,3.325000e-01)
(0.3550,2.168333e-01)
(0.3500,1.347368e-01)
(0.3450,7.540000e-02)
(0.3400,4.077419e-02)
(0.3350,1.950000e-02)
(0.3300,8.793103e-03)
(0.3250,3.216667e-03)
(0.3200,1.272727e-03)
(0.3150,4.196429e-04)
(0.3100,1.439510e-04)
(0.3050,4.143784e-05)
};
\addlegendentry{ML lower bound}

	\end{semilogyaxis}
	\end{tikzpicture}
        }
   \hfill
  \subfloat[$\mathcal{R}\left(2, 11\right)$\label{2f}]{%
        \begin{tikzpicture}[ scale = 0.68 ]
	\begin{semilogyaxis}[xlabel=Crossover probability, ylabel=Block error rate,, ylabel=Block error rate, xmin=0.32, xmax=0.37, ymax=1e-1, ymin=1e-4, grid=both, yminorgrids=true,legend pos=north west, legend style={font=\footnotesize}, legend cell align=left,xticklabels={0.32, 0.33, 0.34, 0.35, 0.36}, xtick={0.32, 0.33, 0.34, 0.35, 0.36}, scaled x ticks=false]
\addplot[solid, thick, mark=x, mark options={solid}, red] coordinates {
(0.3700,3.269281e-01)
(0.3650,2.196922e-01)
(0.3600,1.217381e-01)
(0.3550,6.117909e-02)
(0.3500,2.768498e-02)
(0.3450,1.014053e-02)
(0.3400,3.657697e-03)
(0.3350,1.113682e-03)
(0.3300,2.656988e-04)
(0.3250,5.962802e-05)
};
\addlegendentry{RPL $L = 1024$}
\addplot[solid, thick, mark=star, mark options={solid}, violet] coordinates {
(0.4200,1.000000e+00)
(0.4150,1.000000e+00)
(0.4100,9.996875e-01)
(0.4050,9.993750e-01)
(0.4000,9.974603e-01)
(0.3950,9.916667e-01)
(0.3900,9.685938e-01)
(0.3850,9.190625e-01)
(0.3800,8.295313e-01)
(0.3750,6.955319e-01)
(0.3700,5.198214e-01)
(0.3650,3.387500e-01)
(0.3600,1.923438e-01)
(0.3550,8.777778e-02)
(0.3500,3.409396e-02)
(0.3450,1.081370e-02)
(0.3400,3.008982e-03)
(0.3350,7.012623e-04)
(0.3300,9.987017e-05)
};
\addlegendentry{RPA}
\addplot[solid, thick, mark=+, mark options={solid}, blue] coordinates {
(0.3800,2.765000e-01)
(0.3750,1.669194e-01)
(0.3700,9.253226e-02)
(0.3650,4.629032e-02)
(0.3600,2.009677e-02)
(0.3550,8.393443e-03)
(0.3500,2.945055e-03)
(0.3450,7.949640e-04)
(0.3400,2.155172e-04)
(0.3350,5.831409e-05)
};
\addlegendentry{GS $N = 80$}
\addplot[dashed, thick, mark=square, mark options={solid}, orange] coordinates {
(0.4200,9.845000e-01)
(0.4150,9.565000e-01)
(0.4100,8.920000e-01)
(0.4050,7.880000e-01)
(0.4000,6.360000e-01)
(0.3950,4.632500e-01)
(0.3900,2.845000e-01)
(0.3850,1.556250e-01)
(0.3800,7.052632e-02)
(0.3750,2.578846e-02)
(0.3700,8.462185e-03)
(0.3650,2.208791e-03)
(0.3600,5.567867e-04)
(0.3550,9.960159e-05)
};
\addlegendentry{GS $N = 1024$}
\addplot[dotted, thick, mark=o, mark options={solid}, black] coordinates {
(0.4200,9.830000e-01)
(0.4150,9.532500e-01)
(0.4100,8.887500e-01)
(0.4050,7.782500e-01)
(0.4000,6.210000e-01)
(0.3950,4.485000e-01)
(0.3900,2.705000e-01)
(0.3850,1.455000e-01)
(0.3800,6.526316e-02)
(0.3750,2.323077e-02)
(0.3700,7.369748e-03)
(0.3650,1.901099e-03)
(0.3600,4.515235e-04)
(0.3550,8.764940e-05)
};
\addlegendentry{ML lower bound}

	\end{semilogyaxis}
	\end{tikzpicture}
        }

\caption{The block error rate performance of RM codes on a BSC. For the recursive permutation list algorithm, we use $L$ to denote the list size. For the graph search algorithm, we use $N$ to denote the number of iterations.}
\label{bsc}
\end{figure*}
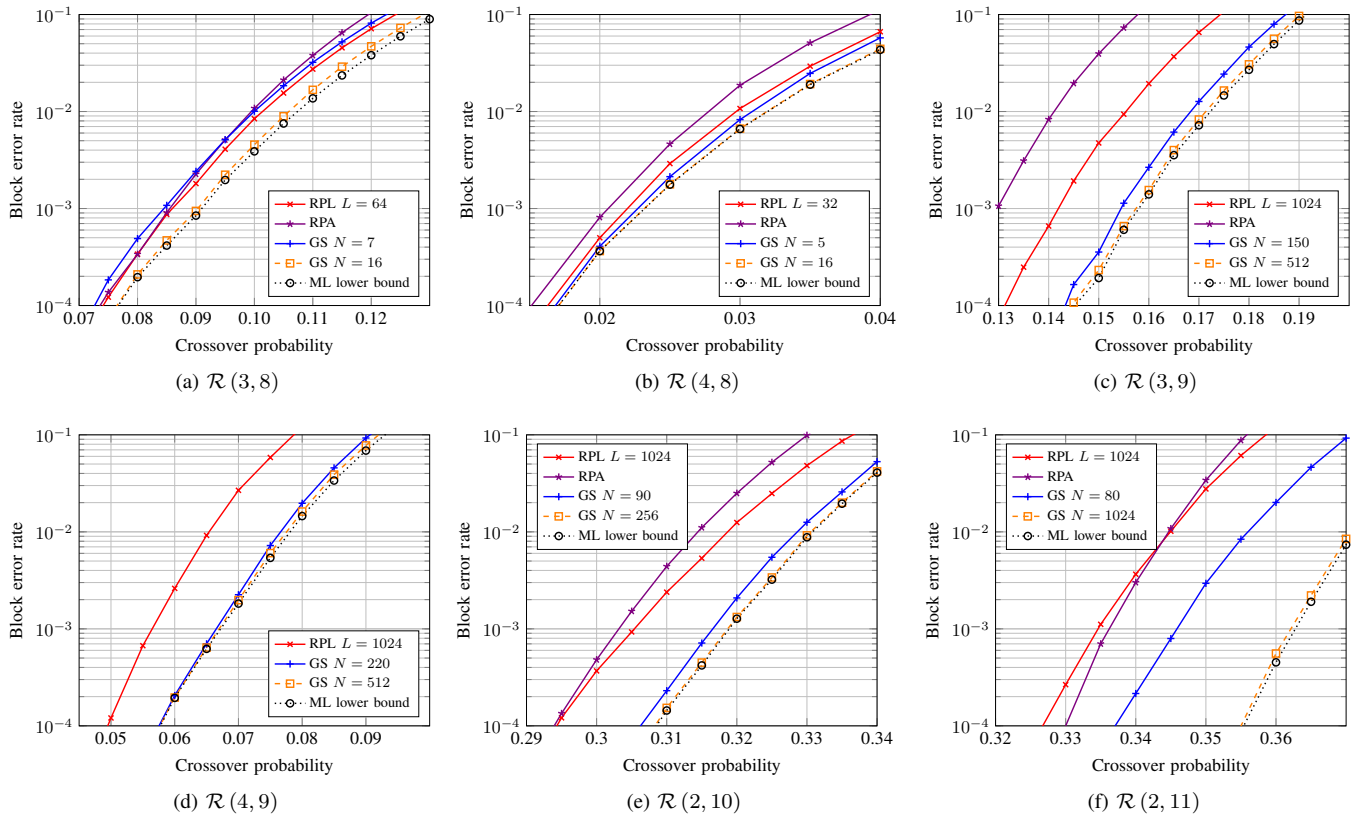

\section{Conclusion}
We presented a new decoder for RM codes, which benefits from the representation of the code as a graph.
Such representation allows using a greedy local search algorithm that is able to find the transmitted codeword efficiently.
In almost all considered cases, our algorithm outperforms the state-of-the-art decoders of RM codes with a similar complexity.
Furthermore, the proposed algorithm allows achieving the performance of the ML decoder with a reasonable worst-case running time on a BSC.
In the case of a BI-AWGN channel, we demonstrated that the ML decoder performance is achieved with a feasible average computational complexity, which can be further reduced using a CRC.

\section*{Acknowledgment}
We thank the reviewers for valuable suggestions that have helped to significantly improve the quality of this paper. 
We also thank Vladimir Gritsenko and Alexey Maevskiy for useful discussions and feedback. 
\bibliographystyle{IEEEtran}
\bibliography{IEEEabrv,myBib}

\begin{thebibliography}{10}
\providecommand{\url}[1]{#1}
\csname url@samestyle\endcsname
\providecommand{\newblock}{\relax}
\providecommand{\bibinfo}[2]{#2}
\providecommand{\BIBentrySTDinterwordspacing}{\spaceskip=0pt\relax}
\providecommand{\BIBentryALTinterwordstretchfactor}{4}
\providecommand{\BIBentryALTinterwordspacing}{\spaceskip=\fontdimen2\font plus
\BIBentryALTinterwordstretchfactor\fontdimen3\font minus
  \fontdimen4\font\relax}
\providecommand{\BIBforeignlanguage}[2]{{%
\expandafter\ifx\csname l@#1\endcsname\relax
\typeout{** WARNING: IEEEtran.bst: No hyphenation pattern has been}%
\typeout{** loaded for the language `#1'. Using the pattern for}%
\typeout{** the default language instead.}%
\else
\language=\csname l@#1\endcsname
\fi
#2}}
\providecommand{\BIBdecl}{\relax}
\BIBdecl

\bibitem{shortGS}
M.~Kamenev, ``On decoding of {R}eed-{M}uller codes using a local graph
  search,'' in \emph{2020 IEEE Information Theory Workshop (ITW)}, 2021, pp.
  1--5.

\bibitem{Muller}
D.~E. Muller, ``Application of {B}oolean algebra to switching circuit design
  and to error detection,'' \emph{Transactions of the I.R.E. Professional Group
  on Electronic Computers}, vol. EC-3, no.~3, pp. 6--12, Sep. 1954.

\bibitem{Reed}
I.~Reed, ``A class of multiple-error-correcting codes and the decoding
  scheme,'' \emph{Transactions of the IRE Professional Group on Information
  Theory}, vol.~4, no.~4, pp. 38--49, Sep. 1954.

\bibitem{Urbanke}
S.~Kudekar, S.~Kumar, M.~Mondelli, H.~D. Pfister, E.~Şaşoǧlu, and R.~L.
  Urbanke, ``Reed–{M}uller codes achieve capacity on erasure channels,''
  \emph{IEEE Transactions on Information Theory}, vol.~63, no.~7, pp.
  4298--4316, July 2017.

\bibitem{Dumer}
I.~Dumer and K.~Shabunov, ``Soft-decision decoding of {R}eed-{M}uller codes:
  recursive lists,'' \emph{IEEE Transactions on Information Theory}, vol.~52,
  no.~3, pp. 1260--1266, March 2006.

\bibitem{Stolte}
N.~Stolte, ``Recursive codes with the {P}lotkin construction and their
  decoding,'' Ph.D. dissertation, Technical University of Darmstadt, Germany,
  2002.

\bibitem{MarvinRMPerm}
\BIBentryALTinterwordspacing
M.~Geiselhart, A.~Elkelesh, M.~Ebada, S.~Cammerer, and S.~t. Brink,
  ``Automorphism ensemble decoding of {R}eed-{M}uller codes.'' [Online].
  Available: \url{https://arxiv.org/abs/2012.07635v1}
\BIBentrySTDinterwordspacing

\bibitem{FHTRMPerm}
\BIBentryALTinterwordspacing
N.~Doan, S.~A. Hashemi, and W.~J. Gross, ``Successive-cancellation decoding of
  {R}eed-{M}uller codes with fast {H}adamard transform,'' 2021. [Online].
  Available: \url{https://arxiv.org/abs/2108.12550v2}
\BIBentrySTDinterwordspacing

\bibitem{Kirill2}
\BIBentryALTinterwordspacing
K.~Ivanov and R.~L. Urbanke, ``On the efficiency of polar-like decoding for
  symmetric codes.'' [Online]. Available:
  \url{https://arxiv.org/abs/2104.06084v2}
\BIBentrySTDinterwordspacing

\bibitem{MinYe}
M.~Ye and E.~Abbe, ``Recursive projection-aggregation decoding of
  {R}eed-{M}uller codes,'' \emph{IEEE Transactions on Information Theory},
  vol.~66, no.~8, pp. 4948--4965, 2020.

\bibitem{KirillRPA}
K.~Ivanov and R.~Urbanke, ``Improved decoding of second-order {R}eed-{M}uller
  codes,'' in \emph{2019 IEEE Information Theory Workshop (ITW)}, 2019, pp.
  1--5.

\bibitem{SRPA}
D.~Fathollahi, N.~Farsad, S.~A. Hashemi, and M.~Mondelli, ``Sparse
  multi-decoder recursive projection aggregation for {R}eed-{M}uller codes,''
  in \emph{2021 IEEE International Symposium on Information Theory (ISIT)},
  2021, pp. 1082--1087.

\bibitem{RPASubcodes}
M.~V. Jamali, X.~Liu, A.~V. Makkuva, H.~Mahdavifar, S.~Oh, and P.~Viswanath,
  ``{R}eed-{M}uller subcodes: machine learning-aided design of efficient soft
  recursive decoding,'' in \emph{2021 IEEE International Symposium on
  Information Theory (ISIT)}, 2021, pp. 1088--1093.

\bibitem{PermGross}
S.~A. {Hashemi}, N.~{Doan}, M.~{Mondelli}, and W.~J. {Gross}, ``Decoding
  {R}eed-{M}uller and polar codes by successive factor graph permutations,'' in
  \emph{2018 IEEE 10th International Symposium on Turbo Codes Iterative
  Information Processing (ISTC)}, 2018, pp. 1--5.

\bibitem{RMBP}
E.~{Santi}, C.~{Hager}, and H.~D. {Pfister}, ``Decoding {R}eed-{M}uller codes
  using minimum-weight parity checks,'' in \emph{2018 IEEE International
  Symposium on Information Theory (ISIT)}, 2018, pp. 1296--1300.

\bibitem{Pfister}
M.~{Lian}, C.~{Häger}, and H.~D. {Pfister}, ``Decoding {R}eed–{M}uller codes
  using redundant code constraints,'' in \emph{2020 IEEE International
  Symposium on Information Theory (ISIT)}, 2020, pp. 42--47.

\bibitem{Pfister2}
A.~{Buchberger}, C.~{Häger}, H.~D. {Pfister}, L.~{Schmalen}, and A.~G. {Amat},
  ``Pruning neural belief propagation decoders,'' in \emph{2020 IEEE
  International Symposium on Information Theory (ISIT)}, 2020, pp. 338--342.

\bibitem{seq1}
\BIBentryALTinterwordspacing
P.~Yuan and M.~C. Coşkun, ``Complexity-adaptive maximum-likelihood decoding of
  modified $\boldsymbol{G}_{N}$-coset codes.'' [Online]. Available:
  \url{https://arxiv.org/abs/2105.04048v2}
\BIBentrySTDinterwordspacing

\bibitem{seq2}
\BIBentryALTinterwordspacing
S.~A. Hashemi, N.~Doan, W.~J. Gross, J.~Cioffi, and A.~Goldsmith, ``A tree
  search approach for maximum-likelihood decoding of {R}eed-{M}uller codes.''
  [Online]. Available: \url{https://arxiv.org/abs/2107.08991v1}
\BIBentrySTDinterwordspacing

\bibitem{Kirill}
K.~{Ivanov} and R.~{Urbanke}, ``Permutation-based decoding of {R}eed-{M}uller
  codes in binary erasure channel,'' in \emph{2019 IEEE International Symposium
  on Information Theory (ISIT)}, 2019, pp. 21--25.

\bibitem{scInactivation}
M.~C. Coşkun, J.~Neu, and H.~D. Pfister, ``Successive cancellation
  inactivation decoding for modified {R}eed-{M}uller and e{BCH} codes,'' in
  \emph{2020 IEEE International Symposium on Information Theory (ISIT)}, 2020,
  pp. 437--442.

\bibitem{recDec}
I.~Dumer and K.~Shabunov, ``Recursive decoding of {R}eed-{M}uller codes,'' in
  \emph{2000 IEEE International Symposium on Information Theory}, 2000, p.~63.

\bibitem{Sloan}
F.~J. MacWilliams and N.~J.~A. Sloane, \emph{The theory of error-correcting
  codes}.\hskip 1em plus 0.5em minus 0.4em\relax Amsterdam, The Netherlands:
  North-Holland, 1977.

\bibitem{shulin}
S.~Lin and D.~J. Costello, \emph{Error control coding}, {S}econd~ed.\hskip 1em
  plus 0.5em minus 0.4em\relax Upper Saddle River, NJ, USA: Pearson Prentice
  hall, 2004.

\bibitem{lowWeightTrellis}
T.~{Koumoto}, T.~{Takata}, T.~{Kasami}, and {Shu Lin}, ``A low-weight
  trellis-based iterative soft-decision decoding algorithm for binary linear
  block codes,'' \emph{IEEE Transactions on Information Theory}, vol.~45,
  no.~2, pp. 731--741, 1999.

\bibitem{Green}
R.~Green, ``A serial orthogonal decoder,'' \emph{JPL Space Programs Summary},
  vol.~37, pp. 247--253, 1966.

\bibitem{Cormen}
T.~H. Cormen, C.~E. Leiserson, R.~L. Rivest, and C.~Stein, \emph{Introduction
  to algorithms}, 3rd~ed.\hskip 1em plus 0.5em minus 0.4em\relax Cambridge, MA,
  USA: MIT Press, 2009.

\bibitem{DumerSource}
K.~Shabunov, ``Error correcting coding research tools,''
  \url{https://github.com/kshabunov/ecclab}, 2021.

\end{thebibliography}

\end{document}